\newtheorem{theorem}{Theorem}
\newtheorem{corollary}{Corollary}[theorem]
\newtheorem{lemma}[theorem]{Lemma}
\let\csname equation*\endcsname\relax
\let\csname endequation*\endcsname\relax
\begin{document}

\title{Visualizing Entanglement in multi-Qubit Systems}

\author{Jonas Bley}
\affiliation{Department of Physics and Research Center OPTIMAS, RPTU Kaiserslautern-Landau, 67663 Kaiserslautern, Germany}

\author{Eva Rexigel}
\affiliation{Department of Physics and Research Center OPTIMAS, RPTU Kaiserslautern-Landau, 67663 Kaiserslautern, Germany}

\author{Alda Arias}
\affiliation{Department of Physics and Research Center OPTIMAS, RPTU Kaiserslautern-Landau, 67663 Kaiserslautern, Germany}
\affiliation{Faculty of Physics, Chair of Physics Education, Ludwig-Maximilians-Universität München (LMU Munich), 80539 Munich, Germany}

\author{Nikolas Longen}
\affiliation{Department of Computer Science and Research Initiative QC-AI, RPTU Kaiserslautern-Landau, 67663 Kaiserslautern, Germany}

\author{Lars Krupp}
\affiliation{Department of Computer Science and Research Initiative QC-AI, RPTU Kaiserslautern-Landau, 67663 Kaiserslautern, Germany}
\affiliation{Embedded Intelligence, German Research Centre for Artificial Intelligence, 67663 Kaiserslautern, Germany}

\author{Maximilian Kiefer-Emmanouilidis}
\affiliation{Department of Physics and Research Center OPTIMAS, RPTU Kaiserslautern-Landau, 67663 Kaiserslautern, Germany}
\affiliation{Department of Computer Science and Research Initiative QC-AI, RPTU Kaiserslautern-Landau, 67663 Kaiserslautern, Germany}
\affiliation{Embedded Intelligence, German Research Centre for Artificial Intelligence, 67663 Kaiserslautern, Germany}

\author{Paul Lukowicz}
\affiliation{Department of Computer Science and Research Initiative QC-AI, RPTU Kaiserslautern-Landau, 67663 Kaiserslautern, Germany}
\affiliation{Embedded Intelligence, German Research Centre for Artificial Intelligence, 67663 Kaiserslautern, Germany}

\author{Anna Donhauser}
\affiliation{Faculty of Physics, Chair of Physics Education, Ludwig-Maximilians-Universität München (LMU Munich), 80539 Munich, Germany}

\author{Stefan Küchemann}
\affiliation{Faculty of Physics, Chair of Physics Education, Ludwig-Maximilians-Universität München (LMU Munich), 80539 Munich, Germany}

\author{Jochen Kuhn}
\affiliation{Faculty of Physics, Chair of Physics Education, Ludwig-Maximilians-Universität München (LMU Munich), 80539 Munich, Germany}

\author{Artur Widera}
\affiliation{Department of Physics and Research Center OPTIMAS, RPTU Kaiserslautern-Landau, 67663 Kaiserslautern, Germany}

\date{\today}
\begin{abstract}
In the field of quantum information science and technology, the representation and visualization of quantum states and related processes are essential for both research and education. In this context, a focus lies especially on ensembles of few qubits. There exist many powerful representations for single-qubit and multi-qubit systems, such as the famous Bloch sphere and generalizations. Here, we utilize the dimensional circle notation as a representation of such ensembles, adapting the so-called circle notation of qubits and the idea of representing the $n$-particle system in an $n$-dimensional space. We show that the mathematical conditions for separability lead to symmetry conditions of the quantum state visualized, offering a new perspective on entanglement in few-qubit systems and therefore on various quantum algorithms. In this way, dimensional notations promise significant potential for conveying nontrivial quantum entanglement properties and processes in few-qubit systems to a broader audience, and could enhance understanding of these concepts as a bridge between intuitive quantum insight and formal mathematical descriptions.
\end{abstract}
\maketitle

\section{Introduction}
Genuine quantum properties are hard to visualize and hence to intuitively understand. Powerful visualizations of simple two-level, single-particle systems such as the Bloch vector representation of the density matrix have been developed to represent properties and dynamics in various situations beyond the mathematical description. Due to the extraordinary mathematical complexity of multi-qubit systems, representing many-body correlations for even two- or few-qubit systems comes along with many challenges.

Geometric representations of pure multi-qubit states and entanglement were previously addressed from the perspective of the mathematical fields of topology and geometry~\cite{Brody_2001,PhysRevA.64.062307,geometry_Quantum_States,PhysRevLett.108.230502}. Other representations include the Majorana representation depicting multi-qubit states on a Bloch sphere~\cite{Makela_2010} or, alternatively, the use of separate Bloch spheres for the non-entangled part of the system and the entangled part~\cite{PhysRevA.95.032308, PhysRevA.93.062320}. Another possibility is the generalization of the Bloch sphere to a Bloch hypersphere~\cite{PhysRevResearch.4.023120}. Also, the product operator formalism can be used to visualize the underlying processes in multi-dimensional NMR spectroscopy~\cite{SORENSEN1984163, Goldenberg2010}. Lastly, a haptic model of entanglement based on knot theory has been proposed~\cite{sym13040581}.

In all of these works, entanglement is geometrically represented. However, they are difficult to generalize to more than two- or three-qubit systems. In addition, the profound mathematical background in, e.g., topology or advanced geometry often needed to understand these models adds multiple layers of complexity. These are, however, often unnecessary in the context of quantum computing algorithms~\cite{9781107002173}. To solve the latter challenge, various two-qubit visualizations are used for educational purposes~\cite{gidney_2017,wootton_2018} and also in the context of \mbox{quantum games~\cite{meqanic_2018,ashoori_weisz_2018,hello_quantum}}. 

For more general applications, one needs to go beyond two- or three-qubit systems. Here, graphical languages like the ZX, ZW or ZH calculi, that can be seen as abstractions of circuit diagrams, are commonly used to visualize quantum states and algorithms~\cite{Backens_2014,7174913,ng2017universal,Backens_2019}. Their abstractness can be an advantage, e.g., for efficiently showing gate identities and the different possible entanglement properties of multi-qubit system~\cite{7174913}. At the same time, they require an already existing understanding of the often complex underlying concepts and processes. To acquire this understanding, explicit visualizations are necessary. One possibility is the use of generalized Wigner functions, the so-called ``DROPS representation", to represent systems of few qubits ~\cite{PhysRevA.91.042122}, usable even beyond three-qubit systems~\cite{Leiner_2020}. This operator-based choice of basis lays focus on quantum correlations and is also useful to describe time evolution of multi-qubit systems \cite{KOCZOR20191}.

Compared to this, the so-called circle notation~\cite{johnston_harrigan_gimeno-segovia_2019} makes use of the computational (0,1) basis. The aim of this visualization is to minimize the reluctance of learners towards quantum notations and linear algebra formalities, and instead highlight the basic ideas and mechanisms of quantum algorithms explicitly. In this notation, complex numbers are represented graphically by visualizing their magnitude as a filled area in a circle, and their phase as gauge in the circle. A drawback of this visualization is that the action of gate operations on the multi-qubit registers is not intuitive but rather has to be memorized. Furthermore, entanglement remains hidden. 

Another idea is to represent qubit systems in space, assigning every qubit an axis. This is akin to the idea from classical computer science to represent $n$-bit-systems in $n$-dimensional hypercubes for the development of classical error correction codes like Hamming code~\cite{bellcore_aiello, doi:10.1080/00207160211287,  doi:10.1137/S0097539798332464}. For quantum states, this is referred to in~\cite[Sec. 17]{geometry_Quantum_States} for the purpose of showing different types (W, D, GHZ,...) of entangled states as various combinations of vertices on the (hyper)-cube. So-called \textit{color codes} \cite{PhysRevLett.97.180501} utilize topological ideas and the representation of qubit systems in lattices, hypercubes or hypercube-like systems, where, sometimes, qubits are represented as colored axes in space~\cite{Kubica_2015,2017CoTPh68285A,Vasmer_2022}. Here, unitary operations are shown as operators that act along the axis of the corresponding qubit. For educative purposes, qubits and unitary operations are also represented as axes in space and along these axes in~\cite{just_2021} where the coefficients of the (computational) basis states are visualized as colored squares, constituting the so-called ``cube notation". The interactive tool~\cite{qcvis} incorporates this dimensional approach with different kinds of ways of visualizing amplitudes (``state bar plot", ``Q sphere", ``state cube" and ``phase disk state cube").

Here, we show that such explicit visualizations enable a visual criterion for entanglement. Entanglement is utilized in many quantum algorithms like the well-known quantum teleportation algorithm or quantum error correction code and it is instructive to think about entanglement properties of few-qubit systems throughout these processes. We also show that in four- and five-qubit systems, we can ``modularize" the dimensional approach, assigning only specific qubits to axes in space and by doing so, highlighting specific entanglement properties and unitary operations in complex four- and five-qubit algorithms like quantum error correction code. For these purposes, we utilize the circle notation~\cite{johnston_harrigan_gimeno-segovia_2019} and introduce dimensionality~\cite{bellcore_aiello, doi:10.1080/00207160211287,  doi:10.1137/S0097539798332464,geometry_Quantum_States,just_2021}. We call this representation \textit{dimensional circle notation} (DCN).

DCN and other such notations consider the well known theory of learning and problem solving with multiple external representations (MERs)~\cite{ainsworth2006,Hu2021} which aims to support learners' understanding by focusing not only on symbolic-mathematical or text-based representations (e.g., formulas or written text), but also on visual-graphical representations (e.g., pictures and diagrams). In addition, as we show in this work, they provide a new perspective on separability of pure multi-qubit states.
Therefore, we see its relevance as a bridge between single-particle visualization and mathematical many-body descriptions to build intuition for few-body quantum correlations.

This paper is structured as follows:
Firstly, in Sec.~\ref{sec:circ_not}, the circle notation is introduced. It is followed by the introduction of entanglement in two-qubit systems in Sec.~\ref{sec:dim_circ}. Examples in three-qubit systems using DCN are presented in Sec.~\ref{sec:three_qubits}. Then, general separability criteria derived using DCN are given in Sec.~\ref{sec:sep_crit}. We conclude in Sec.~\ref{sec:conclusions},
illustrate further extensions of DCN, like visualization of quantum algorithms in four-and five-qubit systems, introduce an interactive DCN web tool and discuss further possible applications of visualizing entanglement properties of few-particle quantum systems.

\section{Circle Notation}\label{sec:circ_not}
We start by briefly introducing the circle notation. In an $n$-qubit system, there are $2^n$ different possible basis states represented by $2^n$ circles. We will work solely in the computational basis as it is commonly used in quantum computing. Here, the basis is given by $\{\ket{i}\}$, $i\in\{0,1\}^n, \ket{i_ni_{n-1}\dots i_1}$, which defines the $n$-qubit register. Any pure $n$-qubit state $\ket{\psi}$ can be written as a superposition of these basis states:

\begin{align}
\begin{split}
    \ket{\psi} &= \alpha_{0}\ket{0\ldots 0}+\alpha_{1}\ket{0\ldots 01}\\
    &+\alpha_{2}\ket{0\ldots 010}+\ldots+\alpha_{2^n-1}\ket{1\ldots1}
\end{split}
\end{align}

with $\alpha_i \in \mathbb{C}, \sum_{i=0}^{2^n-1}|\alpha_i|^2=1$.
As per the convention used here, the rightmost entry in the ket state corresponds to the first qubit and the leftmost entry to the $n$'th qubit. This means that the least significant qubit in the binary system corresponds to the rightmost entry. As shown in Fig.~\ref{fig:circnot}, the circle notation graphically represents the magnitudes of the amplitudes $\alpha_i$ as filled inner circles with radius $|\alpha_i|$ and their phase $\varphi$ of $\alpha_i=e^{i\varphi}|\alpha_i|$ as the angle between the radial line and a vertical line. Some important single qubit operations (in a single qubit system) are shown in Fig.~\ref{fig:single_qubit_operations} in Appendix~\ref{sec:CNsinglequbit}.

\begin{figure}[htb]
    \centering
    \includegraphics[width=0.25\textwidth]{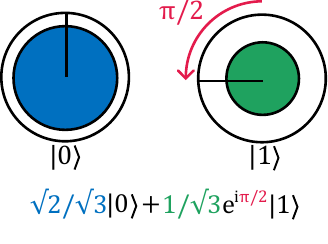}\caption{A qubit in the state $\ket{\psi}=\sqrt{2/3}\ket{0}+1/\sqrt{3}e^{i\pi/2}\ket{1}$ in circle notation~\cite{johnston_harrigan_gimeno-segovia_2019}. The outer circles represent the basis states $\ket{0}$ and $\ket{1}$. The radii of the inner circles represent the absolute value of the corresponding coefficients. The radius of the blue circle is $\sqrt{2/3}$ and the radius of the green circle $1/\sqrt{3}$. The blue area is double the size of the green area, showing that measuring would, on average, yield the result 0 twice as often as 1. The angles of the lines in respect to a vertical line represent the phases of the corresponding coefficients. Here, the angle of the line of the coefficient $1/\sqrt{3}e^{i\pi/2}$ of the basis state $\ket{1}$ is horizontal and facing left, representing the phase $\pi/2$.}
    \label{fig:circnot}
\end{figure}

For two qubits, the possible states are lined up as shown in Fig.~\ref{fig:multi_circnot}. In standard circle notation, one can not immediately determine whether the represented state is separable or entangled. We refer to~\cite{johnston_harrigan_gimeno-segovia_2019} for a precise and comprehensive introduction to the circle notation, in particular, unitary operations and measurements in multi-qubit systems. For calculating their effect, if not memorized, operations require the additional effort of checking each basis state in Dirac ket notation which could reduce the advantage of this representation in respect to the mathematical representation. 

\begin{figure}[htb]
    \centering
    \includegraphics[width=0.4\textwidth]{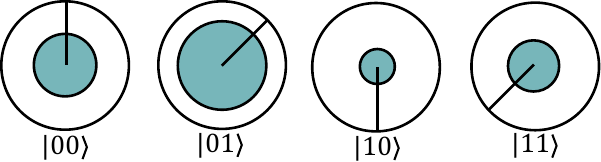}
    \caption{The two-qubit state $\ket{\psi}=1/2\ket{00}+1/\sqrt{2}e^{-i\pi/4}\ket{01}-1/\sqrt{12}\ket{10}+1/\sqrt{6}e^{3i\pi/4}\ket{11}$ in circle notation~\cite{johnston_harrigan_gimeno-segovia_2019}. The states are ordered in ascending order in the binary system, where e.g. the first qubit represents the rightmost number $i_1$ in a state $\ket{i_2i_1}$.}
    \label{fig:multi_circnot}
\end{figure}

\section{Entanglement in two-Qubit Systems}\label{sec:dim_circ}

In dimensional notations, instead of arranging states in a row, every qubit is assigned to an axis in a new direction in space. Here, in contrast to the standard circle notation, it is enough to understand these operations in single-qubit systems to understand them in any multi-qubit system~\cite{just_2021}. Additionally, as we show in this section, such visualizations reveal entanglement properties in two-qubit systems. For this, we use DCN based on circle notation~\cite{johnston_harrigan_gimeno-segovia_2019}. Fig.~\ref{fig:dimcirc2} shows how product states are formed in such a dimensional arrangement, following the standard Kronecker product.

\begin{figure*}[htb]
    \centering
    \includegraphics[width=0.85\textwidth]{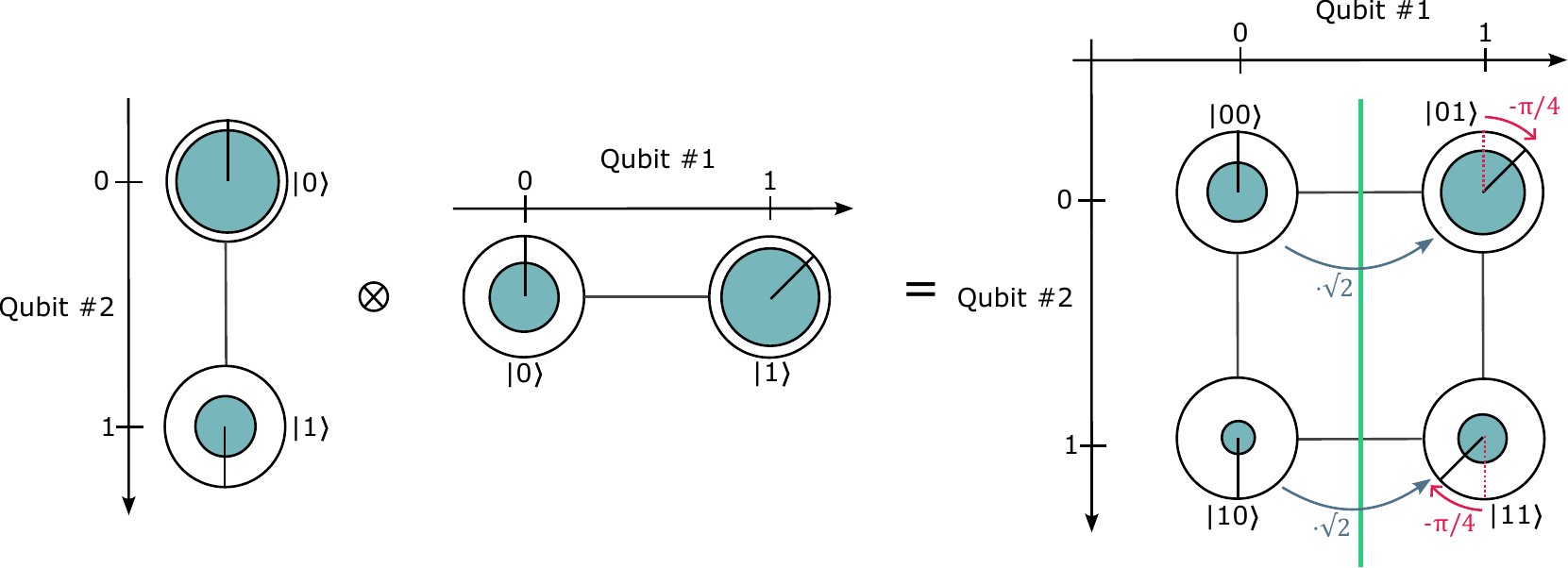}
    \caption{Visual representation of the separability of the product state $\ket{\psi}=(\sqrt{3}/2\ket{0}-1/2\ket{1})\otimes(1/\sqrt{3}\ket{0}+\sqrt{2}/\sqrt{3}e^{-i\pi/4}\ket{1})=1/2\ket{00}+1/\sqrt{2}e^{-i\pi/4}\ket{01}-1/\sqrt{12}\ket{10}+1/\sqrt{6}e^{3i\pi/4}\ket{11}$ in DCN. Similarly to \cite{just_2021}, Qubit~\#1 is attached to each of the basis states of qubit~\#2 to form a two dimensional array of basis states. The amplitudes of the combined state follow the standard Kronecker product. The system is separable which is shown by the green symmetry axis. The radii of the inner circles are compared in blue gray and the phases are compared in red. The coefficient ratio along this axis is $\alpha_{01}/\alpha_{00}=\alpha_{11}/\alpha_{10}=\sqrt{2}e^{-i\pi/4}$. Along the other symmetry axis, $\alpha_{10}/\alpha_{00}=\alpha_{11}/\alpha_{01}=1/3e^{i\pi}=-1/3$.
    }
    \label{fig:dimcirc2}
\end{figure*}

\paragraph{Entanglement} In the classical circle notation, see Fig.~\ref{fig:multi_circnot}, it is cumbersome to distinguish a separable state from an entangled one. In this section, we will show how dimensional notations allow spotting separable states in the two-qubit case. A state $\ket{\psi}=\alpha_{00}\ket{00}+\alpha_{01}\ket{01}+\alpha_{10}\ket{10}+\alpha_{11}\ket{11}$ is separable into $\ket{\psi}=(\alpha_1\ket{0}+\beta_1\ket{1})\otimes \left(\alpha_2\ket{0}+\beta_2\ket{1}\right)$, where $\otimes$ is the Kronecker product, if and only if 
\begin{equation}\label{eq:twoqubits_sep}
    \alpha_{00}\alpha_{11}=\alpha_{01}\alpha_{10}
\end{equation}
as stated in, e.g., Ref.~\cite[p. 396]{geometry_Quantum_States}. We can represent this condition in terms of coefficient ratios $\alpha_{00}/\alpha_{01}=\alpha_{10}/\alpha_{11}$ in the case of $\alpha_{01},\alpha_{11}\neq0$ or $\alpha_{10}/\alpha_{00}=\alpha_{11}/\alpha_{01}$ in the case of $\alpha_{10},\alpha_{11}\neq0$. In the case of more than two coefficients being 0, the system is trivially separable. In summary, this means we can visually not only identify entangled states, but also get a sense for the degree and the type (phase or magnitude) of entanglement by comparing the ratios of the coefficients $\alpha_{00}/\alpha_{01}=r_1e^{i\varphi_1}, \alpha_{10}/\alpha_{11}=r_2e^{i\varphi_2}$ in terms of the ratio of their amplitudes $r_1/r_2$ and the difference of their phases $\varphi_1-\varphi_2$. For example, the concurrence $\mathcal{C}$ is a common way to measure entanglement~\cite{wootters2001entanglement}. It is defined as $\mathcal{C}=2|\alpha_{11}\alpha_{00}-\alpha_{10}\alpha_{01}|=2r_1|1-r_2/r_1 e^{i\varphi_1-\varphi_2}|$ for pure two-qubit states (under the assumption of $r_1>0$). It can be seen that the concurrence is large for large differences in phases ($|\varphi_2-\varphi_1|\approx\pi$) and large or small ratios of magnitudes ($r_2/r_1\gg1$ or $r_2/r_1\ll1$). We compare these ratios for every pair of states along the axis of one qubit, where both of the corresponding coefficients are non-zero. Then, we can determine whether the system is symmetrical along that axis, apart from a (complex) ratio. If we find symmetry, we know that the system is separable. Fig.\ref{fig:dimcirc2} visualizes that building product states results in a separable system using this ratio characterization. Examples of amplitude- and phase-entangled systems are shown in Fig.~\ref{fig:dimcircsymm2}.

It is important to note that this representation of separability into single-particle states only holds if the chosen basis states are themselves separable. We consider exclusively the computational basis here, but in principle any \textit{separable} basis can be used.

\begin{figure*}[htb]
    \centering
    \includegraphics[width=0.65\textwidth]{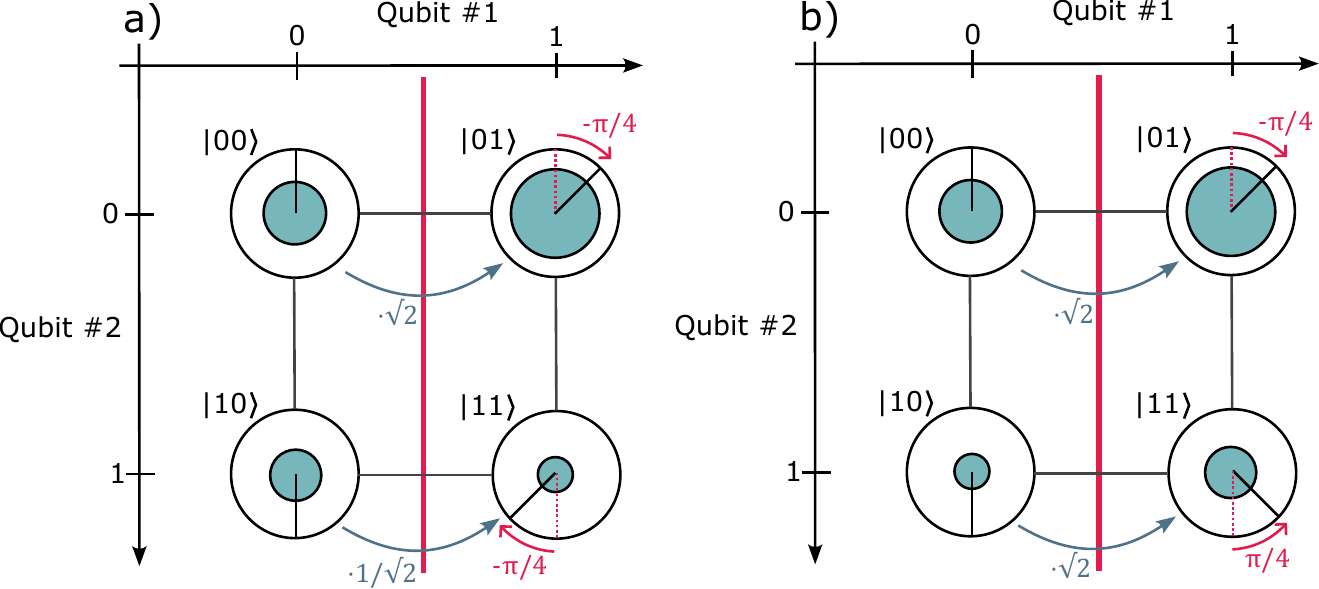}
    \caption{Entanglement in two-qubit systems visualized. a) The state $\ket{\psi}=1/2\ket{00}+1/\sqrt{2}e^{-i\pi/4}\ket{01}-1/\sqrt{6}\ket{10}+1/\sqrt{12}e^{3i\pi/4}\ket{11}$. It is entangled, because $\alpha_{11}/\alpha_{10}=1/\sqrt{2}e^{-i\pi/4}\neq\sqrt{2}e^{-i\pi/4}=\alpha_{01}/\alpha_{00}$. b) The state $\ket{\psi}=1/2\ket{00}+1/\sqrt{2}e^{-i\pi/4}\ket{01}-1/\sqrt{12}\ket{10}+1/\sqrt{6}e^{-3i\pi/4}\ket{11}$. It is (phase-)entangled, because $\alpha_{11}/\alpha_{10}=\sqrt{2}e^{i\pi/4}\neq\sqrt{2}e^{-i\pi/4}=\alpha_{01}/\alpha_{00}$.}
    \label{fig:dimcircsymm2}
\end{figure*}

\paragraph{Measurements}
Measuring a single qubit, the state collapses into a classical bit of 0 or 1. Similarly, in a terminal measurement of $n$-qubits the system collapses into the classical bit string $i=i_ni_{n-1}\dots i_1$, where $i\in \{0,1\}^n$.
The measurement of a subset of qubits is, however, more peculiar. In conventional circle notation, see Fig.~\ref{fig:multi_circnot}, one needs to precisely identify the subset of qubits measured, by evaluation of the corresponding register state, see~\cite{johnston_harrigan_gimeno-segovia_2019} for more details. In~\cite{just_2021}, measurements are shown as removal of axes. We show partial measurement (see Fig.~\ref{fig:measurement}) such that all circles along the measured qubit differing from the measured value turn empty. Afterwards, the state simply has to be renormalized.
Furthermore, the probabilities of measuring 0 or 1 are given by the sum of the areas of the inner circles of the basis states corresponding to that value. This dimensional visualization of quantum states shows an important property of entangled states: If the state is entangled, the state after the measurement will differ depending on the measurement result.

\begin{figure*}[htb]
    \centering
    \includegraphics[width=0.8\textwidth]{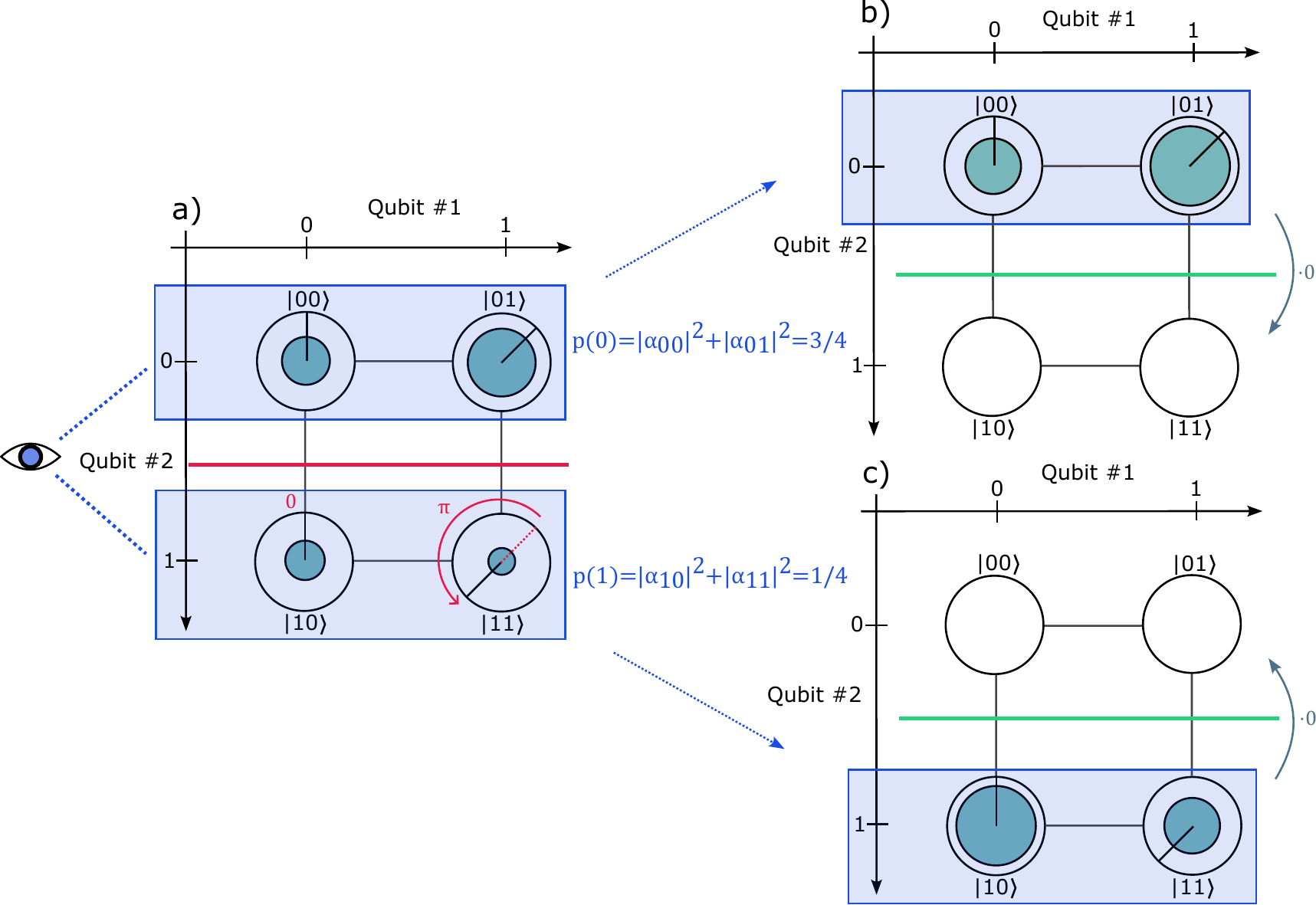}
    \caption{Entanglement in a two-qubit quantum system in the context of measurement. a) The initial state $\ket{\psi}=1/2\ket{00}+1/\sqrt{2}e^{-i\pi/4}\ket{01}+1/\sqrt{6}\ket{10}+1/\sqrt{12}e^{3i\pi/4}\ket{11}$ of the system. The state is entangled as can be seen e.g. when comparing the phase differences from top to bottom along the red axis. The phases differ by a factor $\pi$, meaning that the possible resulting states of the measurement will differ by a relative phase $\pi$. Comparing the areas of the inner circles, one can see that measuring $0$ is more likely than measuring $1$. In fact, $p(0)=(1/2)^2+(1/\sqrt{2})^2=3/4$ and $p(1)=1/4$. b) The state of the system after measuring 0. All circles where qubit~\#2 is 1 are cleared and the system is renormalized. c) The state of the system after measuring 1. After measurement, the entanglement is destroyed as can be seen in both b) and c) with the ratio 0 along the green axis.}
    \label{fig:measurement}
\end{figure*}

\paragraph{Unitary Operations}
Examples of unitary operations in single qubit systems are shown in Fig.~\ref{fig:single_qubit_operations} in Appendix~\ref{sec:CNsinglequbit} as in~\cite{johnston_harrigan_gimeno-segovia_2019}. Having understood them and in order to generalize from single-qubit systems to multi-qubit systems in circle notation, one still needs to memorize not only the effects of single qubit operations but instead all possible actions of single qubit gates on all possible qubits. The dimensional arrangement eliminates this drawback. Single-qubit gates need only to be applied alongside the axis of the qubit considered. Thus, the visualization of single-qubit operations within two-qubit systems is transferable from the one-qubit case. This, importantly, still holds for larger qubit systems as we show in the following sections. A comparison of DCN with the standard circle notation is shown in Fig.~\ref{fig:xgatemulti} for the Pauli-$X_1$- and $X_2$-gates. Note that local unitary operations leave the ratio characterization of separable states intact, i.e., we can not entangle a non-entangled system locally and vice-versa, in agreement with the no-communication theorem~\cite{Peres_2004}.

\begin{figure*}[htb]
    \centering
    \includegraphics[width=0.7\textwidth]{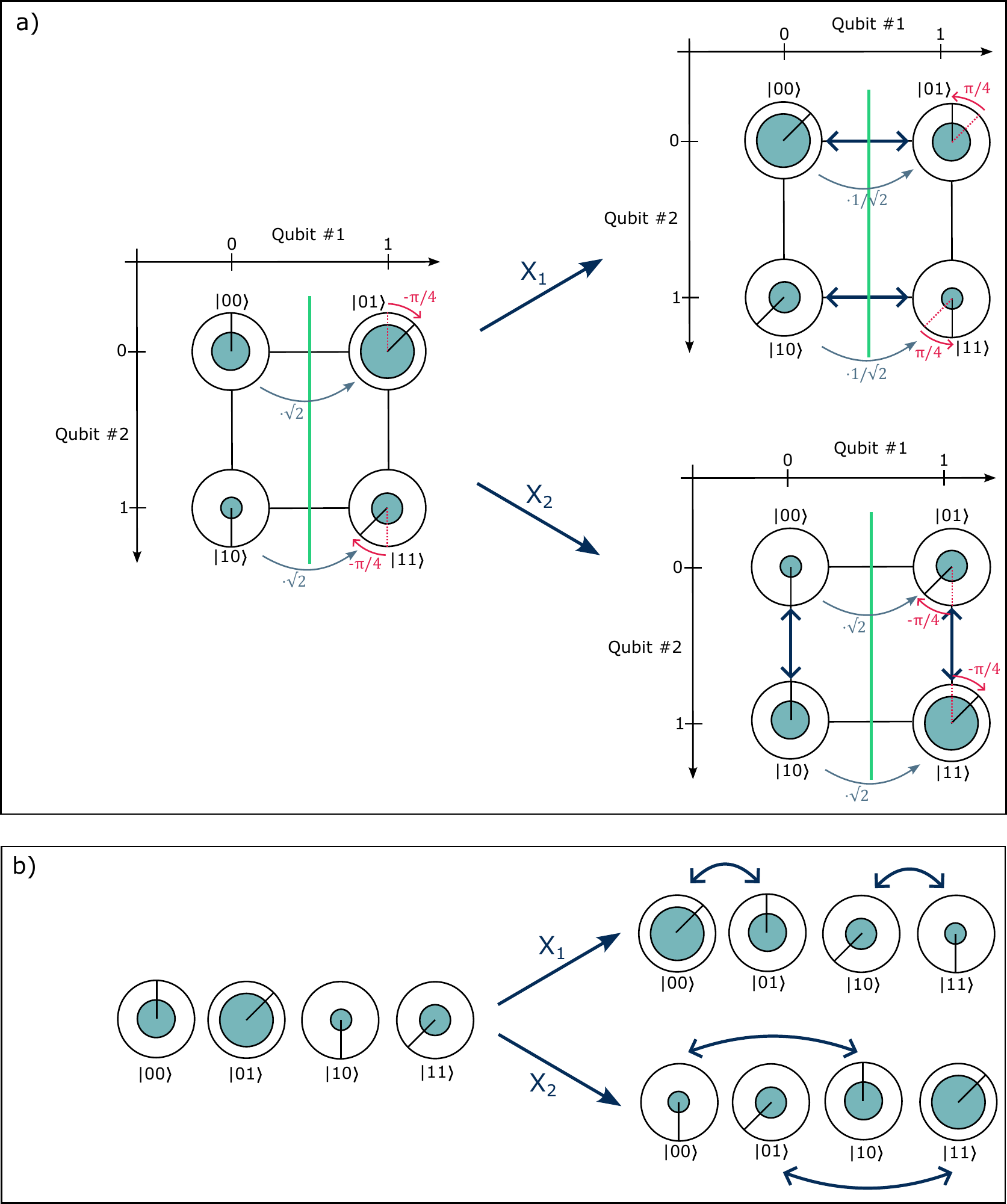}\caption{a) Visualization of an example of the conservation of entanglement properties under single-qubit unitary operations. $X_1$- and $X_2$-gate acting dimensionally on the two-qubit state $\ket{\psi}=1/2\ket{00}+1/\sqrt{2}e^{-i\pi/4}\ket{01}-1/\sqrt{12}\ket{10}+1/\sqrt{6}e^{3i\pi/4}\ket{11}$, similarly to \cite{just_2021}. The $X_1$-gate acts on \textit{all} states along the axis of qubit~\#1, swapping the coefficients of $\ket{00}$ and $\ket{01}$ as well as the coefficients of $\ket{10}$ and $\ket{11}$. The outcome of the $X_1$ operation is the state $\ket{\psi}=1/\sqrt{2}e^{-i\pi/4}\ket{00}+1/2\ket{01}+1/\sqrt{6}e^{3i\pi/4}\ket{10}-1/\sqrt{12}\ket{11}$. Similarly, the $X_2$ gate acts on all states along the axis of qubit~\#2, swapping the coefficients of $\ket{00}$ and $\ket{10}$ as well as $\ket{01}$ and $\ket{11}$. The outcome of the $X_2$ operation is the state $\ket{\psi}=-1/\sqrt{12}\ket{00}+1/\sqrt{6}e^{3i\pi/4}\ket{01}+1/2\ket{10}+1/\sqrt{2}e^{-i\pi/4}\ket{11}$. In both cases, due to the unitary operations being local, the separability of the system is retained as is shown by the green symmetry axes. b) The same operations in standard circle notation for comparison~\cite{johnston_harrigan_gimeno-segovia_2019}. Here, separability and its retention is not as easily visible.}
    \label{fig:xgatemulti}
\end{figure*}

Two-qubit operations also work geometrically in dimensional notations and -- again -- avoid the necessity of memorizing multiple operations of, e.g., controlled gates where the targeted and controlled qubits are swapped.

The CNOT-gate applies a NOT (X)-gate to the target qubit if the control qubit has value 1. This has a geometric explanation: the CNOT-gate swaps all states where the control qubit is 1 along the axis of the target qubit. The CNOT-gate is a multi-qubit gate that can change entanglement properties of the system.

The SWAP-gate exchanges two qubits in the system, which is equivalent to swapping the two qubit axes, while entanglement properties of the system are conserved. This gate can be decomposed into three CNOT-gates which is relevant in practice e.g. because existing quantum computer hardware can often only make use of CNOT-gates for qubit interactions. Fig.~\ref{fig:cnotswap} shows how this decomposition can be visualized geometrically. The conservation of entanglement properties is apparent in dimensional notations, because swapping axes does not change which ratio of coefficients are present in the system.

\begin{figure*}[htb]
    \centering
    \includegraphics[width=0.8\textwidth]{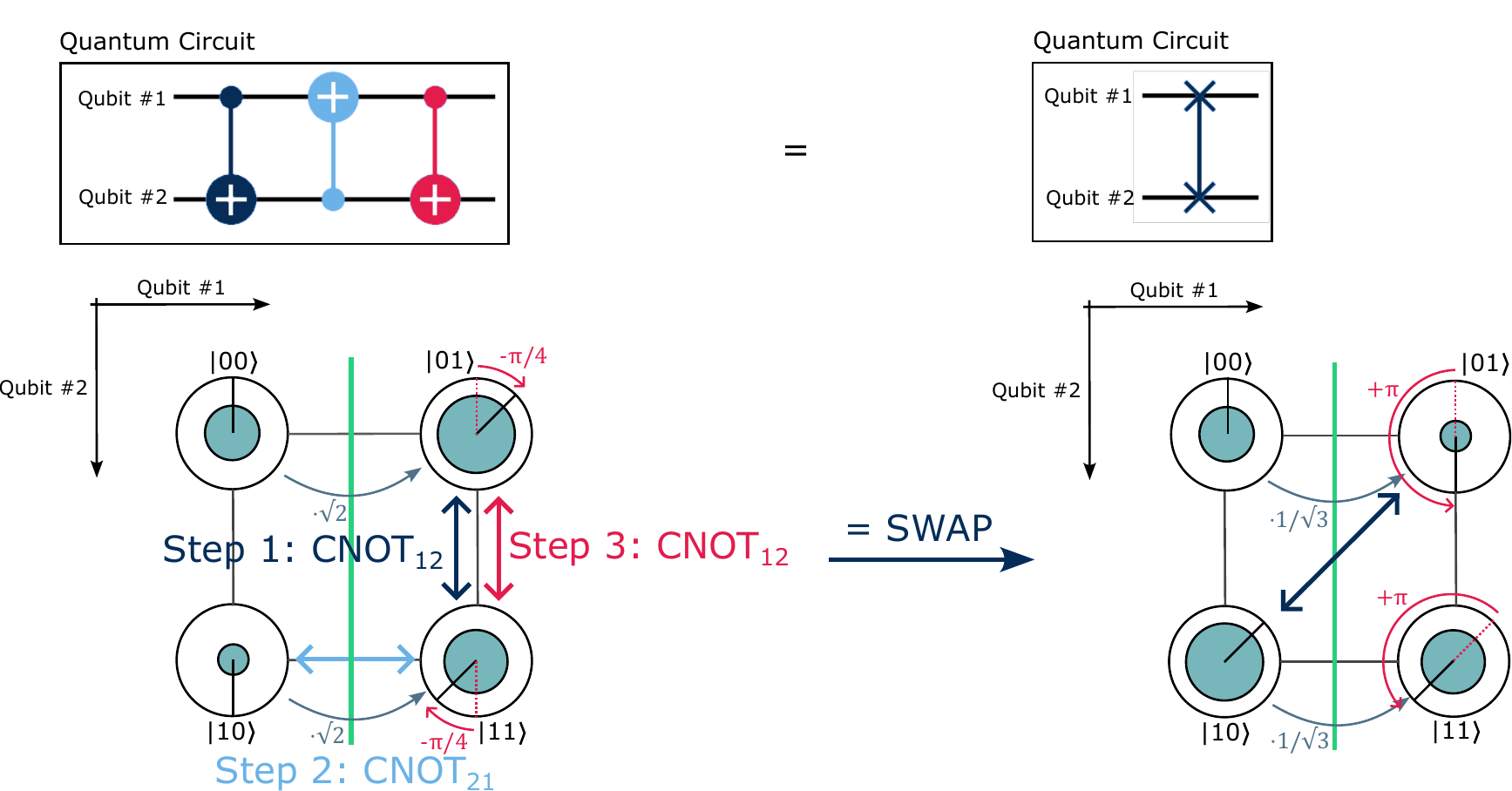}
    \caption{The SWAP-gate as an example of a multi-qubit gate that retains entanglement properties, implemented as three CNOT-gates on the two-qubit state $\ket{\psi}=1/2\ket{00}+1/\sqrt{2}e^{-i\pi/4}\ket{01}-1/\sqrt{12}\ket{10}+1/\sqrt{6}e^{3i\pi/4}\ket{11}$. The corresponding Quantum Circuits are displayed at the top. The relation CNOT$_{12}$CNOT$_{21}$CNOT$_{12}=$SWAP can be geometrically explained by swapping the states along the two axes step by step. The final state $\ket{\psi}=1/2\ket{00}-1/\sqrt{12}e^{-i\pi/4}\ket{01}+1/\sqrt{2}e^{-i\pi/4}\ket{10}+1/\sqrt{6}e^{3i\pi/4}\ket{11}$ is shown on the right hand side. Note that it is visually apparent that both states are separable (see Fig.~\ref{fig:dimcircsymm2} a)), which the SWAP-gate does not change, but a single CNOT-gate would entangle the system.}
    \label{fig:cnotswap}
\end{figure*}

In Appendix~\ref{sec:DCNmulti}, we provide additional DCN examples for $\text{CNOT}_{12}=(\text{H}_2\otimes \text{H}_1)\text{CNOT}_{21}(\text{H}_2\otimes \text{H}_1)$ as an example of a phase kickback swapping the role of target and control qubit, see Fig.~\ref{fig:phasekickback}. We also show the Deutsch algorithm which is often considered as an example of quantum parallelism and a (albeit non-practical) use-case of phase kickback, see Fig.~\ref{fig:deutsch}. The representation of the Deutsch algorithm in DCN shows that although a CNOT-gate is present, no entanglement has been created, and therefore the algorithm could, in principle, be realized classically which has been shown in classical optical systems~\cite{Kish2023}. Again, we expect dimensional approaches to be more intuitive than non-dimensional approaches.

\section{Entanglement in three-Qubit Systems}\label{sec:three_qubits}
We now shift from two-qubit systems to three-qubit systems and explore the advances of dimensional notations in respect to standard circle notation. Similarly to the transfer from one-qubit systems to two-qubit systems, dimensional operations in three-qubit systems are transferable from the one- or two-qubit cases. Still, the additional qubit leads to a few key differences that we will explain in the following.

\paragraph{(Partial) Separability and Entanglement}
To distinguish separable states from entangled states, we apply a similar procedure taken from the two-qubit case to determine whether a three-qubit system is separable. The two key differences are:

\begin{enumerate}
    \item In order to compare the ratios of coefficients, we look for symmetry \textit{planes} instead of axes. This way, we compare the ratios of the top coefficients with the bottom coefficients, left with right or front with back (see Corollary~\ref{cor:par_sep}, Appendix~\ref{sec:sep}). This is shown in Fig.~\ref{fig:dimcircsymm3}.
    \item We can differentiate between partial and full separability and compare along two planes. If the ratios are the same along only one plane, we have an entangled two-qubit system that the third qubit, represented by the axis perpendicular to this symmetry plane, is independent of (Fig.~\ref{fig:dimcircsymm3} is an example of such a state). If and only if the ratios are the same along two planes, they are also the same along the third plane and we have a fully separable system. 
\end{enumerate}

\begin{figure}[htb]
    \centering
    \includegraphics[width=0.9\columnwidth]{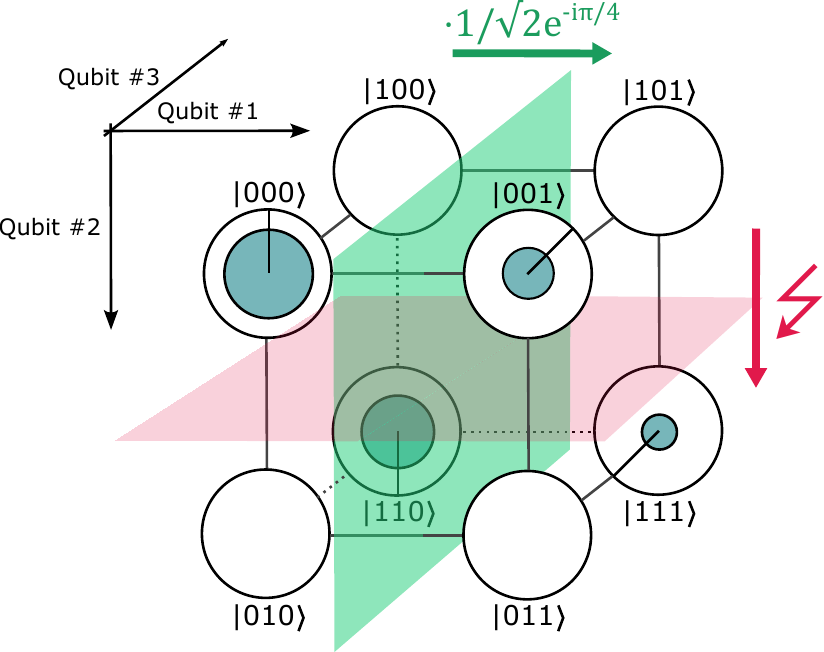}
        \caption{The partially separable state $\ket{\psi}=1/\sqrt{2}\ket{000}+1/\sqrt{6}e^{-i\pi/4}\ket{001}-1/2\ket{110}+1/2\sqrt{3}e^{3i\pi/4}\ket{111}=(\sqrt{2}/\sqrt{3}\ket{00}-1/\sqrt{3}\ket{11})\otimes (\sqrt{3}/2\ket{0}+1/2e^{-i\pi/4}\ket{1})$.
        The green symmetry plane shows that qubit~\#1 can be separated from the system as the ratio $1/\sqrt{2} e^{-i\pi/4}$ can be applied to go from left to right. However, the system is \textit{not} fully separable because there is no symmetry along the axis of qubit~\#2, i.e. there is no such ratio and the red plane is not a symmetry plane. Similarly, there is no symmetry along the axis of qubit~\#3.
    }
    \label{fig:dimcircsymm3}
\end{figure} 

\paragraph{Quantum Teleportation}
Quantum Teleportation has been at the heart of quantum technologies for many years, allowing the transfer of quantum information between two parties over arbitrary distances when an EPR pair is shared between them. It has multiple applications in quantum communication~\cite{Gisin2007} and quantum computation~\cite{10.1145/3394885.3431604,PhysRevLett.124.080502} and is therefore an essential part of quantum information processing~\cite{Pirandola2015}. Because it incorporates many fundamental concepts of quantum information science and technology like entanglement, unitary operations and measurements, quantum teleportation is a suitable example of how dimensional notations could enhance understanding of quantum algorithms in general \cite{just_2021}. In particular, the protocol utilizes three-qubit entanglement which, as we show, can be visualized in dimensional notations, offering a visual perspective on entanglement within the protocol.

Quantum teleportation works as follows: A pair of entangled qubits \#2 and \#3 in the state $\ket{\phi^+}_{32}=1/\sqrt{2}(\ket{00}+\ket{11})$ is prepared. Qubit~\#3 is sent to Bob and qubit~\#2 to Alice. Alice also has qubit~\#1 in the state $\ket{\psi_1}$ which she does not necessarily need to know and that she wants to teleport to Bob.

During quantum teleportation, the information of qubit~\#1 is transferred to qubit~\#3. In dimensional notations, this has geometric meaning (see Fig.~\ref{fig:qt_sum}): Because of the equivalence of an axis with a qubit, transferring information from one qubit to another is the same as transferring information from one \textit{axis} to another. This can be done using the unitary operations CNOT$_{12}$ and $H_1$. These operations only act on qubit~\#1 and \#2, i.e. along the axis of qubit~\#1 and \#2. In practice, this means that Alice does not need physical access to qubit~\#3. Note that this transfer of information is only possible because qubit~\#2 and qubit~\#3 are entangled. To achieve her goal, Alice first applies a CNOT-gate with qubit~\#1 as control and qubit~\#2 as target, fully entangling the system (see Fig.~\ref{fig:qt_sum}). She then applies a Hadamard-gate to qubit~\#1.

\begin{figure*}[htb]
    \centering
    \includegraphics[width=0.8\textwidth]{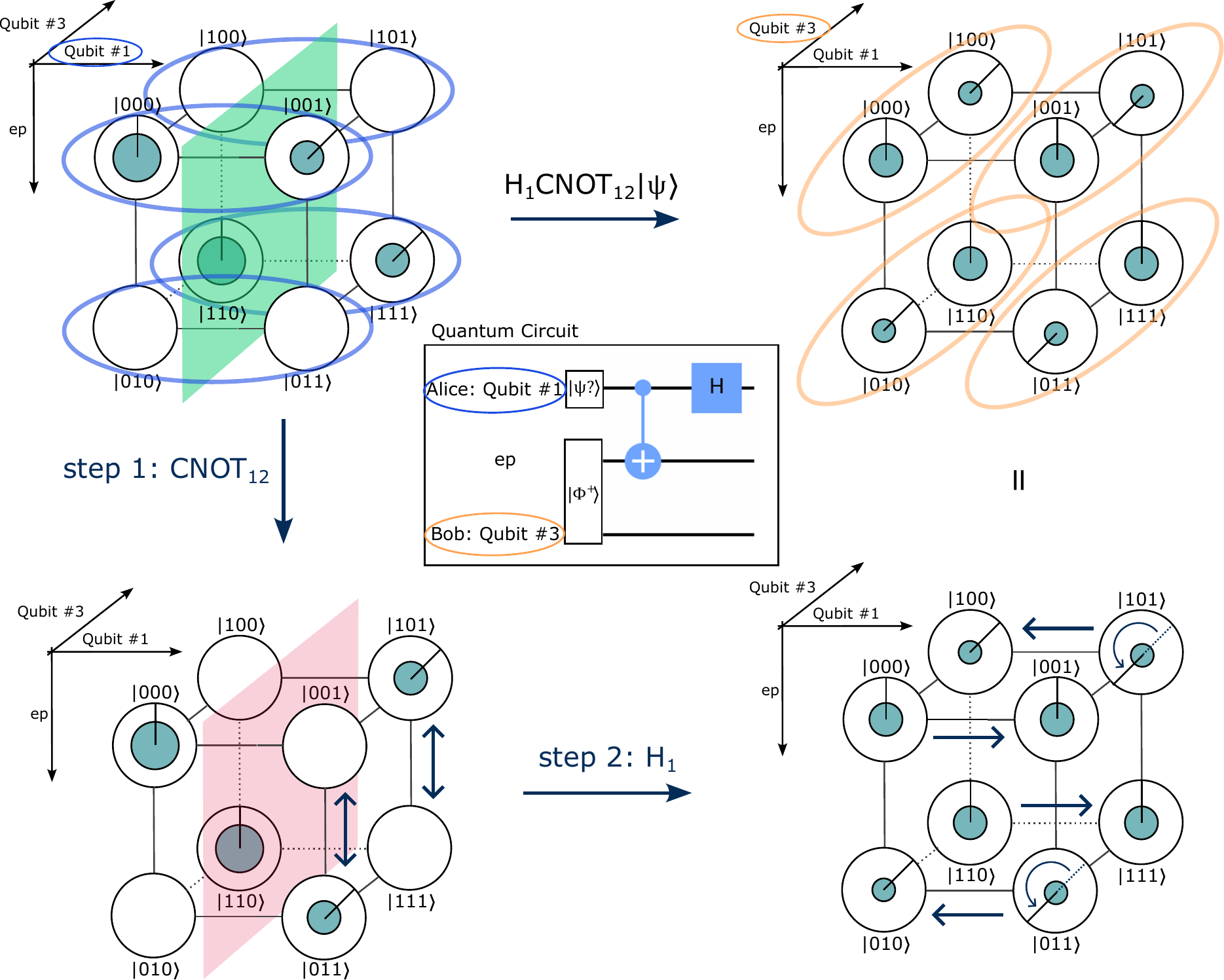}
    \caption{Entanglement utilization in the central part of the quantum teleportation algorithm. Qubit~\#1 starts in the arbitrary state $\ket{\psi}_1=\sqrt{2/3}\ket{0}+1/\sqrt{3}e^{-i\pi/4}\ket{1}$. Qubits~\#2 and \#3 start in the bell state $\ket{\psi^+}_{32}=1/\sqrt{2}(\ket{00}+\ket{11})$. The product state is partially separable (the same as the state in Fig.~\ref{fig:dimcircsymm3}), shown by the green symmetry plane. The information that is initially stored in qubit~\#1 (blue) which is independent of the other two qubits is transferred to qubit~\#3 (yellow) using only unitary operations on qubit~\#1 and \#2, i.e. operations only along the axes of qubits \#1 and \#2 in two steps. Step 1: Swap states on the right hand side (where qubit~\#1 is 1) along the axis of qubit~\#2 using a CNOT-gate with control qubit~\#1 and target qubit~\#2. This destroys the symmetry along the axis of qubit \#1 and the system is now fully entangled, hence the red plane is shown. In fact, one can see that entangling qubit~\#1 with qubit~\#2 also entangled qubit~\#3 with qubit~\#1 by transfer of entanglement. Step 2: Split states along axis of qubit~\#1 using a Hadamard-gate on qubit~\#1. The application of the Hadamard-gate does not change the entanglement properties of the system as it is a local unitary operation. The corresponding quantum circuit is displayed in the middle.}
    \label{fig:qt_sum}
\end{figure*}

When Alice now measures qubit~\#1 and qubit~\#2, the four possible measurement outcomes 00, 01, 10 and 11 lie on the 2D plane spanned by qubit~\#1 and qubit~\#2. The resulting state of qubit~\#3 depends on the measurement result. Alice sends the result to Bob who applies an $X$ and/or a $Z$-gate if needed so that his qubit~\#3 is in the state that qubit~\#1 previously was in. This last step is shown in Fig.~\ref{fig:qt_measurement}, Appendix~\ref{sec:DCNmulti}.

\section{Entanglement in multi-Qubit Systems}\label{sec:sep_crit}

We now generalize the visualization of entanglement properties to multi-qubit systems beyond three qubits. While determining whether a given mixed state is separable is, in general, a (strongly) NP-hard problem~\cite{gurvits2003classical,gharibian2009strong}, separability of pure states is well understood within the density matrix formalism \cite{Plenio_1998,Horodecki_2009}.
However, as quantum computing algorithms tend to be discussed quite explicitly~\cite[Sec. 4]{9781107002173}, it can be tedious to refer back to the density matrix formalism while working out a given quantum algorithm, just to show the entanglement properties of the system at any given moment. We argue that it would be beneficial to be able to see entanglement even when working without the density matrix formalism. In the ideal case, the system containing all qubits that are part of the algorithm is a pure state. Lemma~\ref{lem} gives a necessary criterion for separability of a given pure state in the computational basis~\cite{doi:10.1142/S0129054103002035, dyn_ent}.

\begin{lemma}[Necessary Criterion for Separability]\label{lem}
    Let $\ket{\psi_N}=\sum_{i=0}^{N-1}c_i\ket{i}$ $\in\mathcal{H}_N$ be a $N=PQ$-separable pure state and $r_0\in[0,Q-1]$, $k_0\in[0,P-1]$. Then, for all $r\in[0,Q-1]$ and $k\in[0,P-1]$: $c_{k_0Q+r_0}c_{kQ+r}=c_{k_0Q+r}c_{kQ+r_0}$.
\end{lemma}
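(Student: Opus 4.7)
The plan is to reduce the statement to a direct consequence of the tensor-product structure, generalizing the two-qubit identity $\alpha_{00}\alpha_{11}=\alpha_{01}\alpha_{10}$ already stated in Eq.~(\ref{eq:twoqubits_sep}). The hypothesis is that $\ket{\psi_N}$ is separable with respect to the bipartition $N=PQ$, so I would begin by writing this assumption explicitly as
\begin{equation*}
\ket{\psi_N}=\ket{\phi_P}\otimes\ket{\chi_Q},\quad \ket{\phi_P}=\sum_{k=0}^{P-1}a_k\ket{k},\ \ket{\chi_Q}=\sum_{r=0}^{Q-1}b_r\ket{r}.
\end{equation*}

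Next I would fix the indexing convention used implicitly in the statement: under the canonical isomorphism $\mathcal{H}_P\otimes\mathcal{H}_Q\simeq\mathcal{H}_N$ the product basis vector $\ket{k}\otimes\ket{r}$ is identified with $\ket{kQ+r}$, so that expanding the tensor product yields $c_{kQ+r}=a_k b_r$ for every $k\in[0,P-1]$ and $r\in[0,Q-1]$. This is the only substantive step and amounts to matching the lemma's linear ordering of coefficients with the labeling that appears naturally in the Kronecker product (the same convention illustrated by Fig.~\ref{fig:dimcirc2}).

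The identity then follows by a one-line calculation: for any choice of $k_0,k\in[0,P-1]$ and $r_0,r\in[0,Q-1]$,
\begin{equation*}
c_{k_0Q+r_0}\,c_{kQ+r}=a_{k_0}b_{r_0}a_k b_r = a_{k_0}b_r\,a_k b_{r_0}=c_{k_0Q+r}\,c_{kQ+r_0},
\end{equation*}
using only the commutativity of multiplication in $\mathbb{C}$. The arbitrariness of $r$ and $k$ gives exactly the quantified statement of the lemma.

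There is no genuine obstacle here; the only thing to be careful about is the bookkeeping that links the single-index labeling $c_i$ with the biparticle labeling $a_k b_r$, and in particular verifying that the identification $i\leftrightarrow(k,r)$ via $i=kQ+r$ is a bijection on the respective index sets so that every pair $(k_0Q+r_0,\,kQ+r)$ with $(k,r,k_0,r_0)$ in the stated ranges corresponds unambiguously to a pair of coefficients of $\ket{\psi_N}$. Once this is spelled out, the proof is complete; I would also note in closing that the $P=Q=2$ specialization recovers Eq.~(\ref{eq:twoqubits_sep}), consistent with the two-qubit discussion in Sec.~\ref{sec:dim_circ}.
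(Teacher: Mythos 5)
Your proof is correct and follows essentially the same route as the paper's: the paper merely arranges the coefficients $c_{kQ+r}=\gamma_{k\,r}$ into a $P\times Q$ matrix before observing the rank-one structure $\gamma_{k\,r}=\alpha_k\beta_r$, after which the identity follows from commutativity exactly as in your one-line calculation. The only difference is presentational (matrix layout versus explicit Kronecker indexing), so no further comment is needed.
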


We provide an alternative proof in Appendix~\ref{sec:sep}. In fact, this criterion can be adjusted to be sufficient by specifying the first non-zero coefficient $c_{i_0}$ with $i_0=k_0Q+r_0$ to examine PQ-separability ~\cite{doi:10.1142/S0129054103002035}. The criterion given in~\cite{doi:10.1142/S0129054103002035} only considers states where $c_{kQ+r}=0$ for all $r<r_0$ (see Corollary~\ref{cor:extra0}, Appendix~\ref{sec:sep}). In fact, in DCN, one can see that states where $c_{kQ+r}\neq0$ for some $r<r_0$ (even if $k>k_0$), are not separable, because symmetry is absent in these states. This is shown in Fig.~\ref{fig:them_sep}, where 2-4-separability is examined for the state $\ket{\psi}=\frac{1}{\sqrt{6}}(\ket{1}+\ket{2}-\ket{3}+\ket{4})+\frac{1}{2\sqrt{3}}(e^{-i\pi/2}\ket{5}+e^{i\pi/2}\ket{7})$ in DCN. As can be seen in DCN, this state is not 2-4 separable, even though $c_1c_7=c_3c_5$, because of the absence of symmetry in regard to qubit~\#3. Therefore, it can be seen in dimensional notations that the generalization of the criterion given in~\cite{doi:10.1142/S0129054103002035} is the following:

\begin{theorem}[PQ-Separability of pure States]\label{them:main}
    Let $\ket{\psi_N}=\sum_{i=0}^{N-1}c_i\ket{i}$ be a pure state where for some $i_0=k_0Q+r_0\in[0,N-1]$, $\alpha_{i_0}\neq0$ and $\forall i < i_0, \alpha_i=0$. Then, $\ket{\psi}$ is $N=PQ$-separable if and only if for all $k\in[k_0+1,P-1]$ and $r\in[0,Q-1]$: $c_{k_0Q+r_0}c_{kQ+r}=c_{k_0Q+r}c_{kQ+r_0}$.
\end{theorem}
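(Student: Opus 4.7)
My plan is to prove both directions directly from the tensor-product structure. Necessity is immediate from Lemma~\ref{lem}. For sufficiency I will construct the two factors $\ket{a}\in\mathcal{H}_P$ and $\ket{b}\in\mathcal{H}_Q$ by reading off a single ``row'' and a single ``column'' of the coefficient array anchored at the index $i_0=k_0Q+r_0$, and then verify $c_{kQ+r}=a_k b_r$ for every pair $(k,r)$ by a short case split on the positions of $k$ and $r$ relative to $k_0$ and $r_0$.

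\textbf{Necessity.} If $\ket{\psi_N}=\ket{a}\otimes\ket{b}$ with coefficient vectors $(a_k)$ and $(b_r)$, then $c_{kQ+r}=a_k b_r$, so both sides of the claimed identity equal $a_{k_0}a_k b_{r_0}b_r$. This is exactly the content of Lemma~\ref{lem}.

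\textbf{Sufficiency.} Assume the identity holds for all $k>k_0$, $r\in[0,Q-1]$, and that $c_i=0$ for $i<i_0$. Set $b_r:=c_{k_0Q+r}$, put $a_{k_0}:=1$, $a_k:=0$ for $k<k_0$, and $a_k:=c_{kQ+r_0}/c_{i_0}$ for $k>k_0$ (the denominator is nonzero by assumption). The minimality of $i_0$ ensures $b_r=0$ for $r<r_0$ while $b_{r_0}=c_{i_0}\neq 0$. The factorization $c_{kQ+r}=a_k b_r$ then splits into four cases: for $k<k_0$, both sides vanish by the minimality hypothesis; for $k=k_0$, the equality holds by construction; for $k>k_0$ with $r\geq r_0$, dividing the given identity by $c_{i_0}$ yields $c_{kQ+r}=c_{k_0Q+r}\,c_{kQ+r_0}/c_{i_0}=a_k b_r$. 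The remaining case $k>k_0$ with $r<r_0$ is the technical heart of the argument.

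\textbf{Main obstacle.} In that last case, the hypothesis $c_{i_0}c_{kQ+r}=c_{k_0Q+r}c_{kQ+r_0}$ would leave $c_{kQ+r}$ a priori unconstrained, but the minimality of $i_0$ makes $c_{k_0Q+r}=0$ (since $k_0Q+r<i_0$), forcing $c_{kQ+r}=0=a_k b_r$. Thus the minimality assumption is precisely what upgrades the bilinear identity into a global factorization $\ket{\psi_N}=\ket{a}\otimes\ket{b}$, up to a rescaling of the two factors preserving the overall normalization. This logical role of $i_0$ is what I expect to require the most care in the write-up: without it, the counterexample depicted in Fig.~\ref{fig:them_sep} shows that the bilinear identity can be satisfied by a state that is not $PQ$-separable.
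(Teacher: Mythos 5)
Your proof is correct and takes essentially the same route as the paper's: necessity from Lemma~\ref{lem}, and sufficiency by explicitly assembling the two factors from the anchor row $k_0$ and anchor column $r_0$ normalized by $c_{i_0}\neq 0$, with the minimality of $i_0$ killing the entries with $r<r_0$. Your four-case verification (dividing by the known-nonzero $c_{i_0}$ throughout) is just a slightly tidier packaging of the paper's intermediate matrix forms and its subcases for vanishing coefficients; the underlying decomposition is identical.
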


We provide a proof in Appendix~\ref{sec:sep}. Theorem~\ref{them:main} has alternative formulations that are also given in Appendix~\ref{sec:sep}. While the forward implication part of the proof is given by Lemma~\ref{lem}, the idea of the backwards implication is finding ratios $m_r$ such that $c_{kQ+r}=m_r c_{kQ+r_0}$ for all $k\in[k_0+1,P-1]$. We therefore have Corollary~\ref{cor:ratios} \cite{dyn_ent} as part of the proof of Theorem~\ref{them:main}.

\begin{corollary}[Ratio Characterization of Separability]\label{cor:ratios}
    Let $\ket{\psi_N}=\sum_{i=0}^{N-1}c_i\ket{i}$ be a pure state where for some $i_0=k_0Q+r_0\in[0,N-1]$, $c_{i_0}\neq0$ and $\forall i < i_0, c_i=0$. Then, $\ket{\psi}$ is PQ-separable if and only if there exist ratios $m_r\in\mathbb{C}$ for $r\in[r_0+1,Q-1]$ such that for all $k\in[k_0+1,P-1]$, $\ket{\psi}$ can be written in the form $c_{kQ+r}=m_r c_{kQ+r}$ for $r\in[r_0+1,Q-1]$ and $c_{kQ+r_0}=0$ for $r<r_0$.
\end{corollary}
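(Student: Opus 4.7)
The plan is to view Corollary~\ref{cor:ratios} as an explicit reformulation of the bilinear constraints in Theorem~\ref{them:main}: the symmetry relations $c_{k_0Q+r_0}c_{kQ+r}=c_{k_0Q+r}c_{kQ+r_0}$ are precisely the statement that the ratios $c_{kQ+r}/c_{kQ+r_0}$ are independent of $k$ (whenever denominators are nonzero), so extracting those common ratios as constants $m_r$ is the natural bridge between the two formulations. One direction of the corollary will follow from Theorem~\ref{them:main} together with the explicit construction of the $m_r$, while the other is essentially immediate once $\ket{\psi}$ is displayed in factored form.

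For the forward direction I would start from the assumption that $\ket{\psi}$ is $PQ$-separable and write $\ket{\psi}=\ket{\phi_P}\otimes\ket{\phi_Q}$ with $\ket{\phi_P}=\sum_k a_k\ket{k}$ and $\ket{\phi_Q}=\sum_r b_r\ket{r}$, so that $c_{kQ+r}=a_k b_r$. The hypothesis $c_{i_0}\neq 0$ forces $a_{k_0}b_{r_0}\neq 0$, while the vanishing of $c_i$ for $i<i_0$ forces first $a_k=0$ for $k<k_0$ (dividing the identity $a_k b_{r_0}=0$ by $b_{r_0}\neq 0$) and then $b_r=0$ for $r<r_0$ (dividing $a_{k_0}b_r=0$ by $a_{k_0}\neq 0$). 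Defining $m_r:=b_r/b_{r_0}$ for $r\in[r_0+1,Q-1]$ then yields $c_{kQ+r}=a_k b_r=a_k b_{r_0}m_r=c_{kQ+r_0}m_r$ for every $k\ge k_0$, together with $c_{kQ+r}=a_k b_r=0$ for $r<r_0$, which is precisely the claimed factored form.

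For the backward direction I would reverse the construction. Setting $m_{r_0}:=1$ and $m_r:=0$ for $r<r_0$ extends the $m_r$ to all $r\in[0,Q-1]$, and defining $a_k:=c_{kQ+r_0}$ for $k\ge k_0$ and $a_k:=0$ for $k<k_0$, a short case check shows $c_{kQ+r}=a_k m_r$ for every pair $(k,r)$: for $r<r_0$ both sides vanish by hypothesis, and for $r\ge r_0$ it is exactly the ratio relation (the $k<k_0$ case reading $0=0$). This factorization is a Kronecker product $\bigl(\sum_k a_k\ket{k}\bigr)\otimes\bigl(\sum_r m_r\ket{r}\bigr)$, and normalizing each factor by its norm yields the desired $PQ$-separable decomposition.

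The main subtlety, where I would take care, is the bookkeeping around the leading-zero conventions: one must not use coefficients with index below $i_0$ to define the $m_r$ (they are all zero), the stipulation $c_{kQ+r}=0$ for $r<r_0$ must be propagated consistently in both directions, and the forward vanishing arguments for $a_k$ and $b_r$ must be carried out in the correct order so that each division is by a quantity already known to be nonzero. Once these index ranges are lined up, Corollary~\ref{cor:ratios} genuinely upgrades Theorem~\ref{them:main} from a symmetry condition into an explicit factored form rather than merely restating it, which is what makes it useful for reading off the tensor factors directly from DCN.
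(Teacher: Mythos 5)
Your proof is correct and takes essentially the same route as the paper, which obtains this corollary as the intermediate step Eq.~\ref{eq:ratios} inside the proof of Theorem~\ref{them:main}: the symmetry conditions yield the ratios $m_r=c_{k_0Q+r}/c_{k_0Q+r_0}$, and the factorization is then built by choosing $\alpha_k=c_{kQ+r_0}$ and $\beta_r=m_r$ (with the leading entries set to zero), which is exactly your construction in both directions. Note that you also silently, and correctly, repair the statement's typos: the relation should read $c_{kQ+r}=m_r c_{kQ+r_0}$ and must be required for the row $k=k_0$ as well (automatic once $m_r$ is defined as above, and as displayed in Eq.~\ref{eq:ratios}), since with $k$ restricted to $[k_0+1,P-1]$ alone the backward implication would fail, e.g.\ for $c_{00}=c_{10}=1$, $c_{01}=5$, $c_{11}=2$ up to normalization.
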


These are exactly the ratios that are visible in dimensional notations. The visual implications of this characterization of separability can be seen throughout this paper. Fig.~\ref{fig:them_sep} shows in DCN why $c_{kQ+r_0}=0$ for $r<r_0$ for all $k\in[k_0+1,P-1]$ is a necessary condition for separability. In Appendix~\ref{sec:4-4-sep}, Fig.~\ref{fig:4-4-sep}, we show 4-4 separability in a four-qubit system using this criterion.

\begin{figure}[htb]
    \centering
    \includegraphics[width=0.48\textwidth]{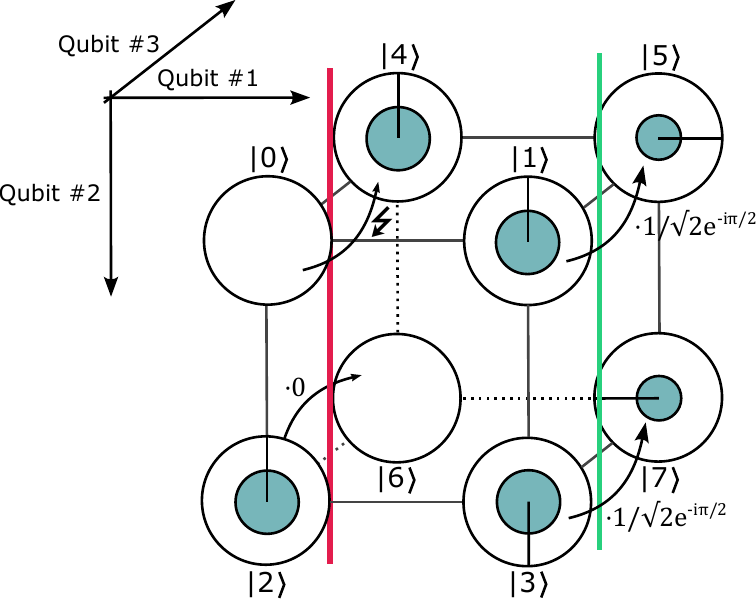}
    \caption{Examination of 2-4-separability (separability in regard to qubit~\#3) of the state $\ket{\psi}=\frac{1}{\sqrt{6}}(\ket{1}+\ket{2}-\ket{3}+\ket{4})+\frac{1}{2\sqrt{3}}(e^{-i\pi/2}\ket{5}+e^{i\pi/2}\ket{7})$ in DCN. It is visible that the state is not separable (in particular not $PQ=$ 2-4-separable) as the red axis is not a symmetry axis, showing that $c_{kQ+r}=0$ for all $r<r_0$ and all $k$ is a necessary condition in Corollary~\ref{cor:ratios} (here, $c_{4}=c_{4+0}\neq0$) in the case of $c_0=0$. Applying Corollary~\ref{cor:ratios} to examine $PQ=$ 2-4-separability, we find $i_0=r_0=1$, $k_0=0$. If we only checked for $c_{k_0Q+r_0}c_{kQ+r}=c_{k_0Q+r}c_{kQ+r_0}$ for $r=k=1$, i.e. $c_{1}c_{7}=c_{3}c_{5}$, we would find that the state is separable.}
    \label{fig:them_sep}
\end{figure}

\section{Conclusions \& Further Extensions}\label{sec:conclusions}

The standard circle notation is already a useful tool for introductory quantum computing courses, as the visualization lowers the barrier to entry into a mathematically challenging field. This is especially needed due to its interdisciplinarity and the various different academical backgrounds of learners~\cite{PhysRevPhysEducRes.18.010150}. Dimensional notations have several advantages over standard notations on a conceptual level. The dimensionality could make the effect of measurements and unitary operations in two- and three-qubit systems more intuitive due to a geometric depiction of single qubits as parts of these systems. As we showed in this paper, by introducing dimensionality, one can visualize separability and entanglement due to the ratio characterization (Corollary~\ref{cor:ratios}). This enables a new perspective on how entanglement is utilized in various quantum algorithms and processes. 

It is important to consider the conceptual limitations of such explicit dimensional notations. First of all, larger than six- to seven-qubit systems will be difficult to visualize due to the exponential scaling of the number of basis states. An important limitation of these notations is that they can not completely replace mathematics for two reasons. Firstly, exact numerical amplitudes and phases are not visible, which, for example, means that many separable states can only \textit{approximately} be identified as such. Secondly, if variables are not displayed, one is restricted to specific examples. However, specific examples are often enough and even needed to understand the general case by abstraction.

Following the discussed limitations, we are working on developing an interactive web tool which makes it possible for everyone to visualize quantum operations in DCN. The repositories for this project can be found here: \url{https://github.com/QuanTUK/} and the website can be accessed via \url{https://dcn.physik.rptu.de/}.

We showed that $2-2^{n-1}$-separability is easily visualized in dimensional notations which is enough for all applications in up to three qubit systems and many applications beyond. In addition, Corollary~\ref{cor:ratios} can be used for e.g. $4-4$-separability in four-qubit systems as is discussed and visualized in Appendix~\ref{sec:4-4-sep}. Furthermore, we show in Appendix~\ref{sec:modular} that we can visualize quantum algorithms of up to at least five qubits as is shown there for a four-qubit error detection and a five-qubit error correction algorithm. For this, we ``modularize" DCN, arranging qubit systems in a variety of different ways to lay focus on specific entanglement properties and/or the geometry of unitary operations. By doing so, we aim to enhance understanding of complex multi-qubit algorithms and processes as for example the five-qubit error correction algorithm that utilizes a combination of quantum entanglement and classical correlation.

Another possible extension is the visualization of qudit-systems (qudits can be in $d$ possible states instead of only two). Gates and known algorithms in qudit systems are described in~\cite{10.3389/fphy.2020.589504}. Although qudits are not in the general focus of quantum computing at the moment, it is possible that they could be relevant sooner rather than later as there are some recent advancements 
\cite{Chi2022, roy2022realization,PRXQuantum.4.030327}. In this context, Corollary~\ref{cor:ratios} can be applied analogously to reveal entanglement properties of such systems.

We conclude that dimensional notations can find educational use in introductory quantum computing and quantum technology courses as well as in contexts beyond education to visualize the entanglement properties of multi-qubit systems complementary to the mathematical formalism. They provide a new perspective on entanglement in multi-qubit systems and how it is utilized in various quantum algorithms and by doing so, paired with the flexibility shown in Appendix~\ref{sec:modular}, could enhance understanding of these algorithms in general.

\section*{Acknowledgements}We thank Stefan Heusler from the WWU Münster for valuable general discussions and specific input regarding basis dependency of qubit models.

We also thank Steffen Glaser from the Technical University of Munich and Bettina Just from the Technische Hochschule Mittelhessen for helpful discussions.

M. K-E., P. L. and A. W. acknowledge support by the Quantum Initiative Rhineland-Palatinate (QUIP) and by the Research Initiative Quantum Computing for Artificial Intelligence (QC-AI).

J.B., E.R., A.A., M. K-E., P.L. and A.W. acknowledge support by the project QuanTUK at the RPTU in Kaiserslautern, supported by the Federal Ministry of Education and Research (FKZ13N15995).

N.L., L.K. and P.L. acknowledge support by the project KI4TUK at the RPTU in Kaiserslautern, supported by the Federal Ministry of Education and Research (BMBF) under grant number 16DHBKI058.

A.D., S.K. and J.K acknowledge support by the project Quantum Lifelong Learning (QL3) at the LMU Munich, supported by the Federal Ministry of Education and Research (BMBF) under grant number 13N16024, and the project DigiQ (EU), supported by the European Union's Digital Europe programme under grant number 101084035.

\appendix
\section{Single-qubit operations in circle notation}\label{sec:CNsinglequbit}

To understand single-qubit operations in multi-qubit systems in dimensional notations, it is enough to understand these operations in single-qubit systems which is one of the main advantages of dimensional notations in comparison to standard notations such as the circle notation. Fig.~\ref{fig:single_qubit_operations} shows some important single-qubit operations in single-qubit systems in circle notation.

\begin{figure}[htb]
    \centering
    \includegraphics[width=0.5\textwidth]{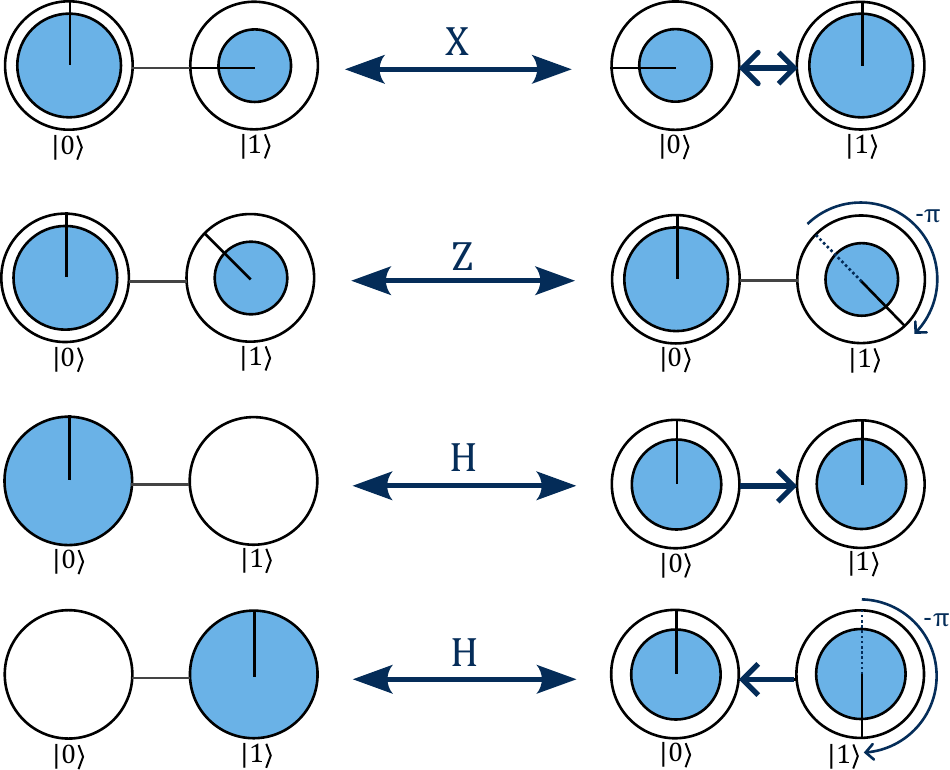}\caption{Single qubit operations in circle notation~\cite{johnston_harrigan_gimeno-segovia_2019}. The $X$-gate flips the coefficients of two states. The $Z$-gate adds a $+\pi$ phase to the $\ket{1}$-state, flipping the sign of the coefficient. The Hadamard-gate splits a state into two, flipping the phase if starting at $\ket{1}$. All these gates are self-adjoint, i.e. their own inverse.}
    \label{fig:single_qubit_operations}
\end{figure}

\section{Separability criteria for pure states}\label{sec:sep}

The following Lemma gives a necessary criterion for PQ-separability of any pure state ~\cite{doi:10.1142/S0129054103002035, dyn_ent}. We provide an alternative proof.

\textbf{Lemma 1:} Let $\ket{\psi_N}=\sum_{i=0}^{N-1}c_i\ket{i}$ $\in\mathcal{H}_N$ be a $P-Q$-separable pure state and let $r_0\in[0,Q-1]$, $k_0\in[0,P-1]$. Then, for all $r\in[0,Q-1]$ and $k\in[0,P-1]$: $c_{k_0Q+r_0}c_{kQ+r}=c_{k_0Q+r}c_{kQ+r_0}$.

\begin{proof}
For examination of $PQ$-separability, we write $\ket{\psi}$ as a matrix with $P$ rows and $Q$ columns:
    \begin{align}
    \ket{\psi} =&
    \begin{bmatrix}
        c_0 & c_1 & \ldots & c_{Q-1} \\
        c_{Q} & c_{Q+1}  & \ldots & c_{2Q-1} \\
        \vdots & \vdots & \ddots & \vdots \\
        c_{(P-1)\cdot Q} & c_{(P-1)\cdot Q+1} & \ldots & c_{P\cdot Q-1}
    \end{bmatrix}\\
    =:&
    \begin{bmatrix}
        \gamma_{0\,0} & \ldots & \gamma_{0\,Q-1} \\
        \vdots & \ddots & \vdots \\
        \gamma_{P-1\,0} & \ldots & \gamma_{P-1\,Q-1}
    \end{bmatrix}
\end{align}

with $\gamma_{k\,r}=c_{kQ+r}$ (don't confuse with density matrix!). 

$\ket{\psi}$ is separable into $\ket{\psi}=\left(\sum_{i=0}^{P-1}\alpha_i\ket{i}\right)\otimes\left(\sum_{j=0}^{Q-1}\beta_j\ket{j}\right)$ if and only if we can write it as

\begin{equation}\label{eq:psi_sep}
    \ket{\psi}=
    \begin{bmatrix}
        \alpha_0\beta_0 & \alpha_0\beta_1 & \ldots & \alpha_0\beta_{Q-1} \\
        \alpha_1\beta_0 & \alpha_1\beta_1  & \ldots & \alpha_1\beta_{Q-1} \\
        \vdots & \vdots & \ddots & \vdots \\
        \alpha_{P-1}\beta_0 & \alpha_{P-1}\beta_1 & \ldots & \alpha_{P-1}\beta_{Q-1}
    \end{bmatrix}
\end{equation}

We notice that we can draw matching diagonals in the matrix, i.e. $\gamma_{k_0\,r_0}\gamma_{k\,r}=\gamma_{k_0\,r}\gamma_{k\,r_0}$ for all $k\in[0,P-1]$ and $r\in[0,Q-1]$, because $\alpha_{k_0}\beta_{r_0}\alpha_k\beta_r=\alpha_{k_0}\beta_r\alpha_k\beta_{r_0}$. Now, with $\alpha_k\beta_r=c_{rQ+k}$, the statement follows.
\end{proof}

This means we have simple criteria for entanglement: For example, A pure state $\ket{\psi}\in\mathcal{H}_N$ is entangled if, for any $P$ and $Q$ with $N=PQ$, we find $k\in[1,P-1]$ and $r\in[1,Q-1]$ such that $c_0c_{kQ+r}\neq c_rc_{kQ}$.

The following Theorem adds conditions such that Lemma~\ref{lem} is also sufficient~\cite{doi:10.1142/S0129054103002035}. We provide an alternative proof.

\textbf{Theorem 2:}
    Let $\ket{\psi_N}=\sum_{i=0}^{N-1}c_i\ket{i}$ be a pure state where for some $i_0=k_0Q+r_0\in[0,N-1]$, $\alpha_{i_0}\neq0$ and $\forall i < i_0, \alpha_i=0$. Then, $\ket{\psi}$ is $N=PQ$-separable if and only if for all $k\in[k_0+1,P-1]$ and $r\in[0,Q-1]$: $c_{k_0Q+r_0}c_{kQ+r}=c_{k_0Q+r}c_{kQ+r_0}$.

\begin{proof}
    $"\Rightarrow"$: If $\ket{\psi}$ is separable, Lemma \ref{lem} says that the above condition has to hold. \\
    $"\Leftarrow"$: $\ket{\psi}$ looks like this:

\begin{equation}
    \ket{\psi} =
    \begin{bmatrix}
        0 & \ldots & 0  & 0 & \ldots & 0 \\
        \vdots & \ddots & \vdots  & \vdots & \ddots & \vdots \\
        0 & \ldots & 0  & 0 & \ldots & 0 \\
        0 & \ldots & 0 & \gamma_{k_0\,r_0} & \ldots & \gamma_{k_0\,Q-1}\\
        \gamma_{k_0+1\,0} & \ldots & \gamma_{k_0+1\,r_0-1} & \gamma_{k_0+1\,r_0}& \ldots & \gamma_{k_0+1\,Q-1}\\
        \vdots & \ddots & \vdots & \vdots & \ddots & \vdots \\
        \gamma_{P-1\,0} & \ldots & \gamma_{P-1\,r_0-1}& \gamma_{P-1\,r_0} & \ldots & \gamma_{P-1\,Q-1}
    \end{bmatrix}
\end{equation}

    with $\gamma_{r_0\,k_0}\neq0$. Due to the condition $c_{k_0Q+r_0}c_{kQ+r}=c_{k_0Q+r}c_{kQ+r_0}\Leftrightarrow \gamma_{k_0 \, r_0}\gamma_{k\,r}=\gamma_{k_0 \, r}\gamma_{k\,r_0}$, also for $r<r_0$, $\ket{\psi}$ can be written as

\begin{equation}\label{eq:psi_0}
    \ket{\psi} =
    \begin{bmatrix}
        0 & \ldots & 0  & 0 & \ldots & 0 \\
        \vdots & \ddots & \vdots  & \vdots & \ddots & \vdots \\
        0 & \ldots & 0  & 0 & \ldots & 0 \\
        0 & \ldots & 0 & \gamma_{k_0\,r_0} & \ldots & \gamma_{k_0\,Q-1}\\
        0 & \ldots & 0 & \gamma_{k_0+1\,r_0}& \ldots & \gamma_{k_0+1\,Q-1}\\
        \vdots & \ddots & \vdots & \vdots & \ddots & \vdots \\
        0 & \ldots & 0 & \gamma_{P-1\,r_0} & \ldots & \gamma_{P-1\,Q-1}
    \end{bmatrix}.
\end{equation}
    Also, $\gamma_{k_0 \, r_0}\gamma_{k\,r}=\gamma_{k_0 \, r}\gamma_{k\,r_0}$ is equivalent to i) $\gamma_{k\, r}=0$ and a) $\gamma_{k_0\,r}=0$ or b) $\gamma_{k\,r_0}=0$ (or both) or ii) $\gamma_{k\,r}=m_r\gamma_{k\,r_0}$ with $m_r=\frac{\gamma_{k_0\,r}}{\gamma_{k_0\,r_0}}$. 
    
    Let's discuss the case i), $\gamma_{k \, r}=0$ for specific $k\in[k_0+1,P-1]$ and $r\in[r_0,Q-1]$. In the case of a), $\gamma_{k_0\,r}=0$, due to $\gamma_{k_0 \, r_0}\gamma_{k\,r}=\gamma_{k_0 \, r}\gamma_{k\,r_0}$ for all $k\in[k_0+1,P-1]$, $\gamma_{k\,r}=0$ has to hold for all $k\in[k_0+1,P-1]$, meaning the whole column has to be 0. In this case, with $m_r=0$, $\gamma_{k\,r}=m_r \gamma_{k\,r_0}$ also holds. In the other case b), $\gamma_{k\, r_0}=0$, $\gamma_{k\,r}=0$ has to hold for all $k\in[k_0+1,P-1]$, meaning the whole row has to be 0. In this case, $\gamma_{k\,r}=m_r\gamma_{k r_0}$ still holds for all $k\in [k_0+1,P-1]$ regardless of how $m_r$ is chosen, because in this case $\gamma_{k\, r_0},\gamma_{k\, r}=0$. In summary, in all cases, $\ket{\psi}=$

\begin{equation}\label{eq:ratios}
    \begin{bmatrix}
        0 & \ldots & 0 &0 & 0 & \ldots & 0 \\
        \vdots & \ddots & \vdots& \vdots & \vdots & \ddots & \vdots \\
        0 & \ldots & 0 &0 & 0 & \ldots & 0 \\
        0 & \ldots & 0 & \gamma_{k_0\,r_0} & m_{r_0+1}\gamma_{k_0\,r_0} & \ldots & m_{Q-1}\gamma_{k_0\,r_0}\\
        0 & \ldots & 0 & \gamma_{k_0+1\,r_0}& m_{r_0+1}\gamma_{k_0+1\,r_0} &\ldots & m_{Q-1}\gamma_{k_0+1\,r_0}\\
        \vdots & \ddots & \vdots & \vdots & \ddots & \ddots & \vdots \\
        0 & \ldots & 0 & \gamma_{P-1\,r_0} & m_{r_0+1}\gamma_{P-1\,r_0} & \ldots & m_{Q-1}\gamma_{P-1\,r_0}
    \end{bmatrix}.
\end{equation}

Writing separable states in the form of Eq.~\ref{eq:ratios} is the central idea of the ratio characterization that, as we show in this paper, can be visualized in dimensional notations. 

We now choose $\alpha_k=0$ for all $k<k_0$ and $\beta_r=0$ for all $r<r_0$, $\alpha_{k_0}=\gamma_{k_0\,r_0}$ and $\beta_{r_0}=1$. Additionally, we choose $\beta_{r}=m_r$ for all $r\in[r_0+1,Q-1]$ and $\alpha_{k}=\gamma_{k\,r_0}$ for all $k\in[k_0+1,P-1]$. Then, $\ket{\psi}$ can be written in the same form as Eq. \ref{eq:psi_sep}.
\end{proof}

Note that this means that Lemma \ref{lem} is also a sufficient criterion if $c_0\neq 0$. The following is another alternative formulation of Theorem \ref{them:main}, requiring the terms where $r<r_0$ all to be 0:

\begin{corollary}\label{cor:extra0}
     Let $\ket{\psi_N}=\sum_{i=0}^{N-1}c_i\ket{i}$ be a pure state where for some $i_0=k_0Q+r_0\in[0,N-1]$, $\alpha_{i_0}\neq0$ and $\forall i < i_0, \alpha_i=0$. Then, $\ket{\psi}$ is $N=PQ$-separable if and only if for all $k\in[k_0+1,P-1]$ and $r\in[r_0+1,Q-1]$: $c_{k_0Q+r_0}c_{kQ+r}=c_{k_0Q+r}c_{kQ+r_0}$ and for all $r<r_0$, $c_{kQ+r}=0$.
\end{corollary}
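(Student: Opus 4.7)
The plan is to reduce Corollary \ref{cor:extra0} directly to Theorem \ref{them:main}, since the two statements partition the indices $r\in[0,Q-1]$ into the range $r\in[r_0+1,Q-1]$ (where a multiplicative relation is imposed) and the range $r<r_0$ (where vanishing is imposed), with the $r=r_0$ case being trivial. The key observation driving the whole argument is that the minimality of $i_0$ forces $c_{k_0Q+r}=0$ for every $r<r_0$, because $k_0Q+r<k_0Q+r_0=i_0$.

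For the forward direction, I would assume $\ket{\psi}$ is $PQ$-separable and invoke Lemma \ref{lem} to obtain $c_{k_0Q+r_0}c_{kQ+r}=c_{k_0Q+r}c_{kQ+r_0}$ for all admissible $k,r$. Restricting to $r\in[r_0+1,Q-1]$ gives the first condition of the corollary. To get the second condition, I would fix any $r<r_0$ and any $k\in[k_0+1,P-1]$, note that $c_{k_0Q+r}=0$ by minimality of $i_0$, so the Lemma's identity reads $c_{k_0Q+r_0}c_{kQ+r}=0$, and then divide by $c_{k_0Q+r_0}\neq 0$ to conclude $c_{kQ+r}=0$.

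For the backward direction, I would show that the corollary's hypotheses imply the stronger-looking hypothesis of Theorem \ref{them:main}, namely $c_{k_0Q+r_0}c_{kQ+r}=c_{k_0Q+r}c_{kQ+r_0}$ for \emph{all} $r\in[0,Q-1]$. This is immediate by case analysis: for $r>r_0$ the identity is the given one; for $r=r_0$ it is a tautology; and for $r<r_0$ both sides vanish, the left by the corollary's vanishing assumption $c_{kQ+r}=0$ and the right because $c_{k_0Q+r}=0$ follows from minimality of $i_0$. Applying Theorem \ref{them:main} then yields $PQ$-separability.

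There is no genuine obstacle here; the content is essentially bookkeeping about which indices are forced to vanish and which carry the multiplicative constraint. The one subtlety worth flagging in the write-up is the asymmetry in the corollary's statement: the multiplicative identity is assumed only on the ``upper-right'' block $k>k_0, r>r_0$, while vanishing is assumed on the ``lower-left strip'' $r<r_0$. Verifying that this asymmetric formulation is genuinely equivalent to the symmetric one in Theorem \ref{them:main} is precisely what the case analysis above accomplishes, and it is also what is visible in DCN as the ``missing" symmetry along the $r<r_0$ columns illustrated in Fig.~\ref{fig:them_sep}.
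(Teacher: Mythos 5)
Your proof is correct and follows essentially the same route as the paper: the paper's proof is the one-line remark that one argues ``analogous to the proof of Theorem~\ref{them:main}, starting with $\ket{\psi}$ in the form of Eq.~\ref{eq:psi_0}'', which is exactly the content of your reduction. Your version merely packages this as a black-box application of Theorem~\ref{them:main} (checking its hypothesis by the three-way case split on $r$, with minimality of $i_0$ forcing $c_{k_0Q+r}=0$ for $r<r_0$) rather than rerunning the construction of the ratios $m_r$, which is a purely presentational difference.
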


\begin{proof}
    Analogous to proof of Theorem \ref{them:main} where one starts with $\ket{\psi}$ in the form of Eq.~\ref{eq:psi_0}.
\end{proof}

Alternatively, Lemma \ref{lem} can be reformulated exchanging $c_0$ with some non-zero $c_{i_0}$ to be sufficient:

\begin{corollary}
    Let $\ket{\psi_N}=\sum_{i=0}^{N-1}c_i\ket{i}$ be a pure state and let $i_1=k_1Q+r_1\in[0,N-1]$ with $c_{i_1}\neq0$. Then, $\ket{\psi}$ is $N=PQ$-separable if and only if for all $k\in[0,P-1]$ and $r\in[0,Q-1]$: $c_{i_1}c_{kQ+r}=c_{k_1Q+r}c_{kQ+r_1}$.
\end{corollary}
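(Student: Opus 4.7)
My plan is to split the equivalence into its two directions, leveraging the matrix reformulation $\gamma_{k,r} := c_{kQ+r}$ used throughout the proofs of this appendix.

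For the forward ($\Rightarrow$) direction, I would simply invoke Lemma~\ref{lem} with the choices $k_0 := k_1$ and $r_0 := r_1$. Since Lemma~\ref{lem} is stated for arbitrary $r_0\in[0,Q-1]$ and $k_0\in[0,P-1]$ without any minimality or non-vanishing hypothesis, the identity $c_{k_1Q+r_1}c_{kQ+r}=c_{k_1Q+r}c_{kQ+r_1}$ is an immediate consequence of $PQ$-separability. So this half is essentially free.

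For the backward ($\Leftarrow$) direction, I would use $\gamma_{k_1,r_1}=c_{i_1}\neq 0$ as a pivot. Dividing the assumed identity $\gamma_{k_1,r_1}\gamma_{k,r}=\gamma_{k_1,r}\gamma_{k,r_1}$ by $\gamma_{k_1,r_1}$ and setting $m_r := \gamma_{k_1,r}/\gamma_{k_1,r_1}$ yields the ratio characterization $\gamma_{k,r}=m_r\,\gamma_{k,r_1}$ for every $k\in[0,P-1]$ and $r\in[0,Q-1]$, exactly in the spirit of Corollary~\ref{cor:ratios}. I would then define $\alpha_k := \gamma_{k,r_1}$ and $\beta_r := m_r$, and verify directly that $\alpha_k\beta_r=\gamma_{k,r_1}\gamma_{k_1,r}/\gamma_{k_1,r_1}=\gamma_{k,r}$, so that $\ket{\psi}=\bigl(\sum_{k}\alpha_k\ket{k}\bigr)\otimes\bigl(\sum_{r}\beta_r\ket{r}\bigr)$, which is the desired $PQ$-factorization.

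The main (minor) obstacle I anticipate is handling the edge cases where $\gamma_{k_1,r}=0$ for some $r$ or $\gamma_{k,r_1}=0$ for some $k$. In both situations the bilinear identity $\gamma_{k_1,r_1}\gamma_{k,r}=\gamma_{k_1,r}\gamma_{k,r_1}$ forces an entire column or row of $(\gamma_{k,r})$ to vanish, and the factorization $\alpha_k\beta_r=\gamma_{k,r}$ still holds trivially (with $\beta_r=0$ or $\alpha_k=0$, respectively); this is the same bookkeeping already carried out in the proof of Theorem~\ref{them:main}. In contrast to that theorem, no ``first nonzero coefficient'' hypothesis is needed here, because the freely chosen pivot $c_{i_1}$ lies inside the submatrix we actually factor, so the ratios $m_r$ and the column $(\alpha_k)$ automatically absorb any leading zeros in the state vector.
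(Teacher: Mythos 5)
Your proof is correct, and the backward direction takes a more direct route than the paper's. The paper proves ``$\Leftarrow$'' by reduction: it locates the \emph{first} nonzero coefficient $c_{i_0}$ (with $i_0\le i_1$), argues that $\ket{\psi}$ can be brought into the leading-zero form of Eq.~\eqref{eq:psi_0}, and then re-runs the argument of Theorem~\ref{them:main} to reach the ratio form of Eq.~\eqref{eq:ratios}. You instead factor directly off the freely chosen pivot $\gamma_{k_1,r_1}\neq0$: dividing the hypothesis $\gamma_{k_1,r_1}\gamma_{k,r}=\gamma_{k_1,r}\gamma_{k,r_1}$ by the pivot gives $\gamma_{k,r}=\alpha_k\beta_r$ with $\alpha_k=\gamma_{k,r_1}$ and $\beta_r=\gamma_{k_1,r}/\gamma_{k_1,r_1}$, which is exactly the form of Eq.~\eqref{eq:psi_sep}. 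This buys a shorter, self-contained argument that never needs the ``first nonzero coefficient'' device, and it makes explicit why the extra hypotheses of Theorem~\ref{them:main} are unnecessary here: the condition is assumed for \emph{all} $k\in[0,P-1]$ and $r\in[0,Q-1]$, not just $k>k_0$, so any leading zeros are forced automatically. One small remark: the edge-case bookkeeping you anticipate (vanishing rows or columns) is not actually needed --- since the pivot is nonzero, the single division yields $\gamma_{k,r}=\alpha_k\beta_r$ uniformly for every $(k,r)$, zeros included, so your ``main obstacle'' dissolves and the proof is even cleaner than you suggest.
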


\begin{proof}
    "$\Rightarrow$" is given by Lemma~\ref{lem}. 

    "$\Leftarrow$": Let $c_{i_0}$ be the first non-zero coefficient of $\ket{\psi}$ ($i_0\leq i_1$). We can then write $\ket{\psi}$ as in Eq.~\ref{eq:psi_0}. Analogous to the proof of Theorem~\ref{them:main}, we can then write $\ket{\psi}$ in the form of Eq.~\ref{eq:ratios}.
\end{proof}

The following ratio characterization of separability is used throughout this paper to visualize entanglement. It is part of the proof of Theorem~\ref{them:main}.

The following can be used as a special case of Corollary~\ref{cor:ratios} to separate single qubits from systems (after renumbering).

\begin{corollary}[2-2$^{n-1}$-Separability]
    \label{cor:par_sep}
    Let $\alpha,\beta,c_i\in\mathbb{C}$. An $n$-qubit state $\ket{\psi}=\sum_{i=0}^{2^n-1}c_i\ket{i}$ is 2-2$^{n-1}$ separable into $\ket{\psi}=(\alpha\ket{0}+\beta\ket{1})\otimes\sum_{i=0}^{2^{n-1}-1} c'_i\ket{i}$ if and only if for all $i\in\{0,\ldots,2^{n-1}-1\}$ either $c_{i}=0$ or there exists ratios $m_i\in\mathbb{C}$ such that $c_{2^{n-1}+i}=m_ic_{i}$.
\end{corollary}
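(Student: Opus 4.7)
The plan is to obtain this as the $P=2$, $Q=2^{n-1}$ instance of Corollary~\ref{cor:ratios} (and hence of Theorem~\ref{them:main}) after matching the indexing conventions. Writing $i = k\cdot 2^{n-1} + r$ with $k\in\{0,1\}$ and $r\in\{0,\dots,2^{n-1}-1\}$, the binary expansion places $k$ as the most significant bit, i.e.\ the qubit being factored out, while $r$ indexes the remaining $(n-1)$-qubit block. Under this identification, the target decomposition $(\alpha\ket{0}+\beta\ket{1})\otimes\sum_r c'_r\ket{r}$ is precisely the $2$-$2^{n-1}$ split that appears in Lemma~\ref{lem} and its sufficient refinement.

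For the forward direction, I would expand the Kronecker product directly: if the decomposition holds, then the basis identifications $\ket{0r}=\ket{r}$ and $\ket{1r}=\ket{2^{n-1}+r}$ yield $c_r=\alpha c'_r$ and $c_{2^{n-1}+r}=\beta c'_r$. Whenever $c_r\neq 0$ one must have $\alpha\neq 0$ and $c'_r\neq 0$, so $m_r:=\beta/\alpha$ satisfies $c_{2^{n-1}+r}=m_r c_r$; whenever $c_r=0$ the first disjunct of the stated condition is already met. Note that separability in fact forces all the $m_r$ with $c_r\neq 0$ to coincide with the single complex number $\beta/\alpha$, which is the structural content of separability inherited from Corollary~\ref{cor:ratios}.

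For the backward direction, I would invoke Corollary~\ref{cor:ratios} with $P=2$ and $Q=2^{n-1}$. Let $i_0$ be the smallest index with $c_{i_0}\neq 0$ and write $i_0=k_0 Q+r_0$. The hypothesis gives a consistent ratio $m$ such that $c_{2^{n-1}+i}=m\, c_i$ whenever $c_i\neq 0$, while the case analysis in the proof of Theorem~\ref{them:main} handles the vanishing rows and columns. One then sets $\alpha=c_{i_0}$ and $\beta=m\, c_{i_0}$ when $k_0=0$, or $\alpha=0$ and $\beta=c_{i_0}$ when $k_0=1$, defines the $c'_r$ from the nonzero row by dividing by $\alpha$ or $\beta$ as appropriate, and verifies that $(\alpha\ket{0}+\beta\ket{1})\otimes\sum_r c'_r\ket{r}$ reproduces $\ket{\psi}$ on every basis element; normalization of the original state ensures that the two factors can be rescaled to genuine unit vectors.

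The main obstacle I anticipate is keeping the degenerate cases straight, namely when the controlling amplitude $c_i$ vanishes for some $i$ while $c_{2^{n-1}+i}$ is nonzero, or when an entire row or column of the $P\times Q$ matrix layout is forced to be zero. These are precisely the configurations handled in the proof of Theorem~\ref{them:main} by splitting into the subcases where $\gamma_{k_0\,r}=0$ or $\gamma_{k\,r_0}=0$; the work here is to translate that bookkeeping into the simpler two-row ($P=2$) language of a single factored qubit and to confirm that the "either/or" phrasing of the stated hypothesis captures exactly those configurations for which a consistent $(\alpha,\beta)$ pair can be recovered.
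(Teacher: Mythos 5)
Your overall route is the paper's own: the printed proof of this corollary is literally ``Analogous to proof of Theorem~\ref{them:main}'', i.e., specialize the $P\times Q$ block-matrix argument to a $2\times 2^{n-1}$ array, and your forward direction (expanding the Kronecker product and reading off $m_i=\beta/\alpha$) is correct. It even sharpens the statement by observing that separability forces all the $m_i$ to coincide.

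That observation, however, is exactly where your backward direction breaks. You write that ``the hypothesis gives a consistent ratio $m$'', but the hypothesis as stated---and as your forward direction verifies it---only asserts, for each $i$ separately, that either $c_i=0$ or \emph{some} $m_i$ with $c_{2^{n-1}+i}=m_ic_i$ exists. For $c_i\neq0$ such an $m_i$ always exists, and nothing ties the different $m_i$ together, so the condition does not force the $2\times2^{n-1}$ coefficient array to have rank one. Concretely, $\tfrac{1}{\sqrt{7}}(\ket{00}+\ket{01}+\ket{10}+2\ket{11})$ satisfies the stated condition with $m_0=1$, $m_1=2$ yet is entangled since $c_0c_3\neq c_1c_2$, and $\tfrac{1}{\sqrt{3}}(\ket{01}+\ket{10}+\ket{11})$ satisfies it through the ``$c_i=0$'' escape clause yet is also entangled. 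Your closing paragraph correctly identifies these degenerate configurations as the obstacle but defers them to the case analysis in Theorem~\ref{them:main}; that analysis does not apply here, because the hypothesis of Theorem~\ref{them:main} (the cross products $c_{k_0Q+r_0}c_{kQ+r}=c_{k_0Q+r}c_{kQ+r_0}$ for \emph{all} $r$, including $r<r_0$) is strictly stronger than the hypothesis of this corollary. To close the argument you must either read the corollary as asserting a single $m$ independent of $i$, together with ``$c_i=0\Rightarrow c_{2^{n-1}+i}=0$'' unless the entire first block vanishes---which is what Corollary~\ref{cor:ratios} with $Q=2$, after the renumbering the paper alludes to, actually delivers---or first verify the product condition $c_ic_{2^{n-1}+j}=c_jc_{2^{n-1}+i}$ for all $i,j$ and then invoke Lemma~\ref{lem} and Theorem~\ref{them:main}. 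The loose wording is arguably a defect of the corollary itself, and the paper's one-line proof does not confront it either, but your sketch leans explicitly on a consistency the hypothesis does not supply, so the ``if'' direction remains unproved as written.
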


\begin{proof}
    Analogous to proof of Theorem \ref{them:main}.
\end{proof}

In dimensional notations, Corollary~\ref{cor:par_sep} visually translates to looking at this exact ratio condition along the axes of one qubit to determine whether that qubit is separable from the system. Other use cases of Corollary~\ref{cor:ratios} are higher order cases of separability in larger than three qubit systems. We provide the example of visualizing 4-4 separability in Sec.\ref{sec:4-4-sep}.

\section{Multi-Qubit Gates and Algorithms in two-Qubit Systems}\label{sec:DCNmulti}

Phase kickback is an inherently quantum concept and an essential part of quantum computing. The main idea is that by local basis transformation, operations with a control and a target qubit are inverted such that the roles of control and target qubit are swapped. This happens because the control qubit inherits the phase of the target qubit while the target qubit is unchanged. This has applications in, e.g., so-called oracle functions that are part of many quantum algorithms -- the controlled gates are applied to a set of auxiliary qubits in the Hadamard basis, such that the logical qubits are changed~\cite{Lee_2016}. Fig.~\ref{fig:phasekickback} shows the most basic example of a phase kickback and Fig.~\ref{fig:deutsch} shows a use case of this: the Deutsch algorithm.

\begin{figure*}[htb]
    \centering
    \includegraphics[width=0.8\textwidth]{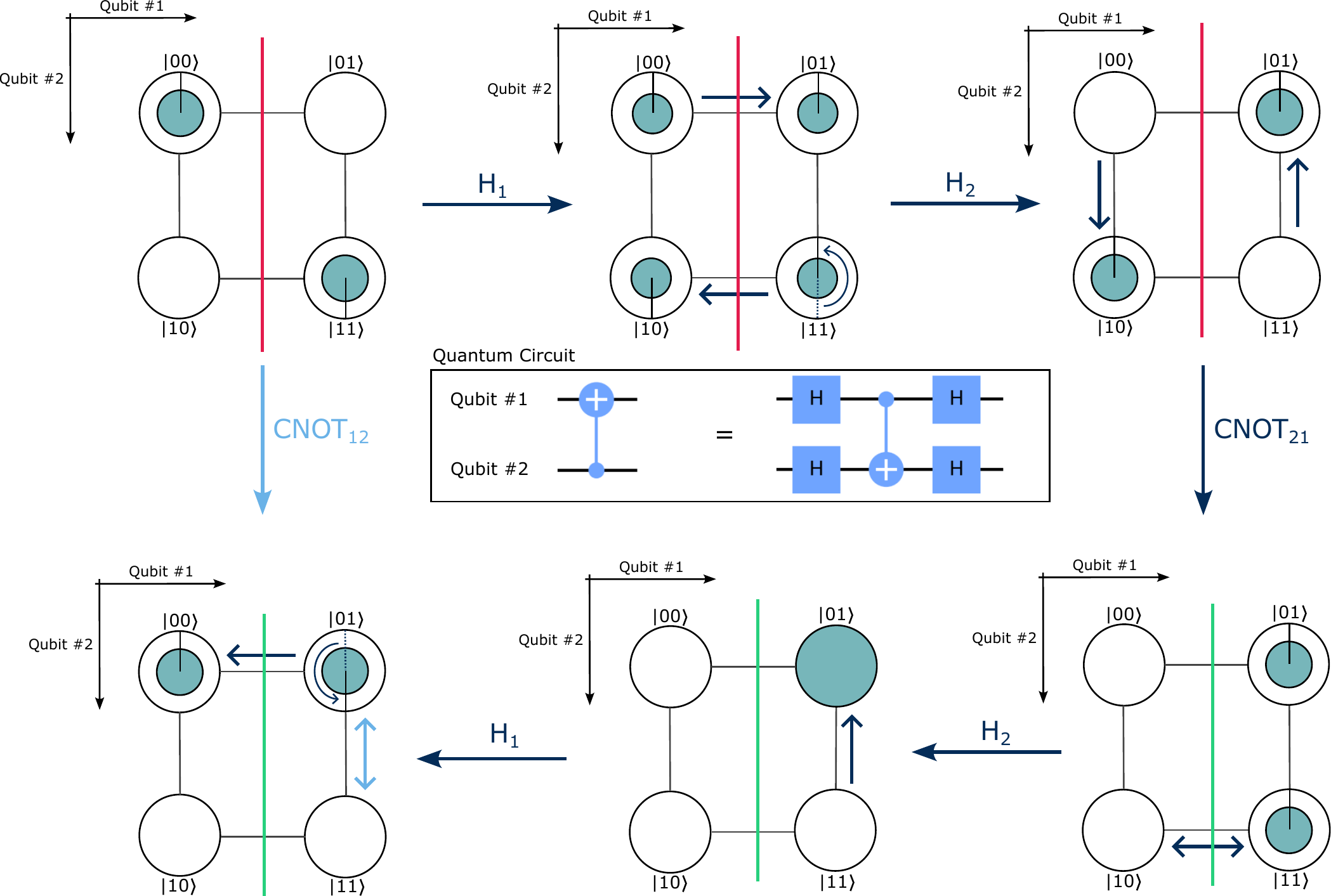}
    \caption{Basic phase kickback, i.e. the relation $\text{CNOT}_{12}=(\text{H}_2\otimes \text{H}_1)\text{CNOT}_{21}(\text{H}_2\otimes \text{H}_1)$, shown with the initial state $\ket{\psi}=1/\sqrt{2}(\ket{00}-\ket{11})$. The change of basis into the Hadamard basis by applying Hadamard-gates on all qubits makes the CNOT$_{21}$-gate work like a CNOT$_{12}$-gate. After application of the CNOT$_{12}$-gate, the state changes from entangled to separable as indicated by the green and red axes.}
    \label{fig:phasekickback}
\end{figure*}

\begin{figure*}[htb]
    \centering
    \includegraphics[width=0.9\textwidth]{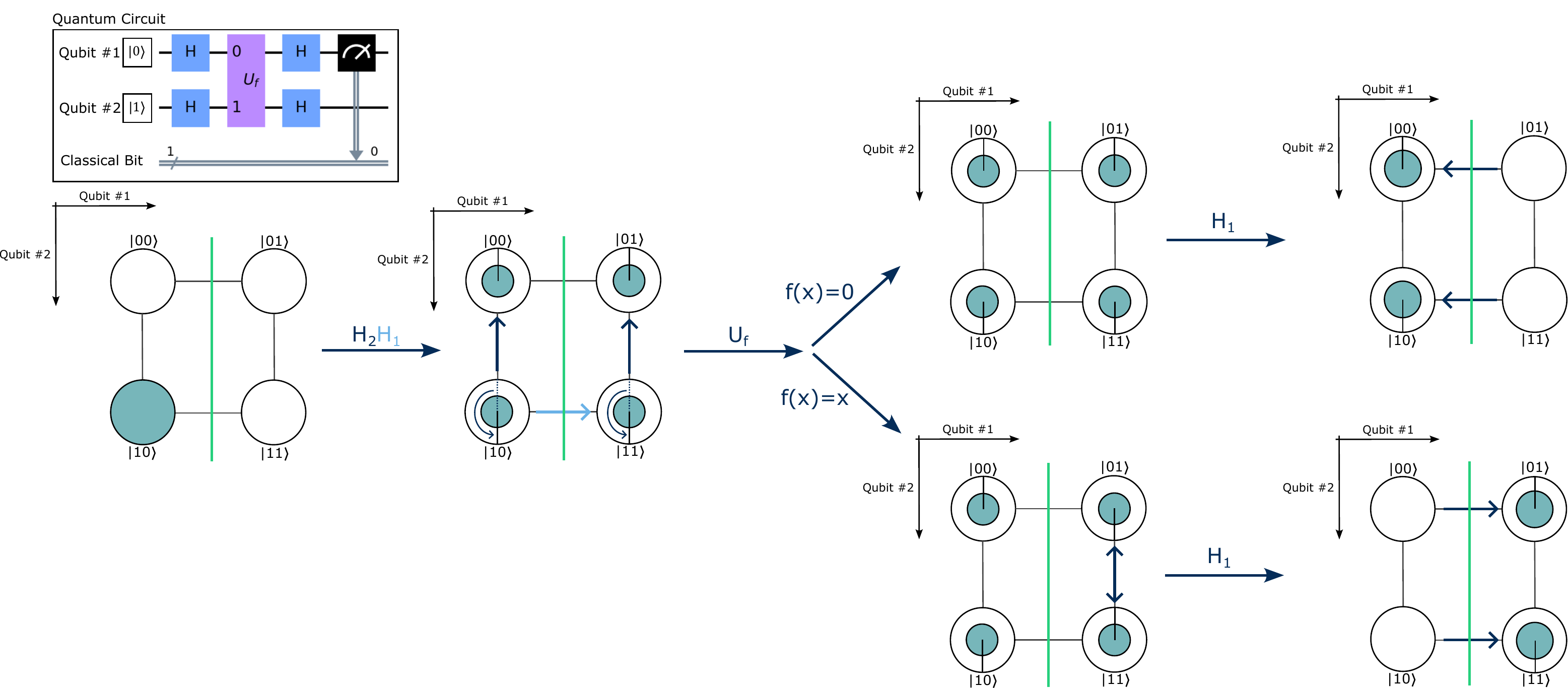}
    \caption{The Deutsch algorithm to determine whether a function $f:\{0,1\}\to \{0,1\}$ is constant ($f=0$ or $f=1$) or balanced ($f(x)=x$ or $f(x)=x\oplus1$ where $1\oplus1=0$). The Qubits are initialized to the state $\ket{10}$. After application of Hadamard-Gates on all qubits, the system is in equal superposition with a phase shift on qubit~\#2. Then the oracle $U_f$ defined by $U_f:\ket{x}\ket{y}\to\ket{x}\ket{f(x)\oplus y}$ is applied. The two cases where $f$ is constant and the two cases where $f$ is balanced only differ by a global phase, respectively. Therefore, only the cases $f=0$ and $f(x)=x$ are shown. After application of a Hadamard-Gate on qubit~\#1, one can see that the operation $U_f$ actually acted on qubit~\#1 due to phase kickback. When measuring qubit~\#1, the result will be 0 when $f$ was balanced and 1 when $f$ was constant. At all points, the system remains separable as shown by the green symmetry axes, showing that the Deutsch algorithm does not utilize entanglement and can, in fact, be implemented classically as shown in \cite{Kish2023} experimentally.}
    \label{fig:deutsch}
\end{figure*}

Fig. \ref{fig:qt_measurement} shows the last step of the quantum teleportation algorithm. The system is fully entangled before the measurement. 

\begin{figure*}[htb]
    \centering
    \includegraphics[width=\textwidth]{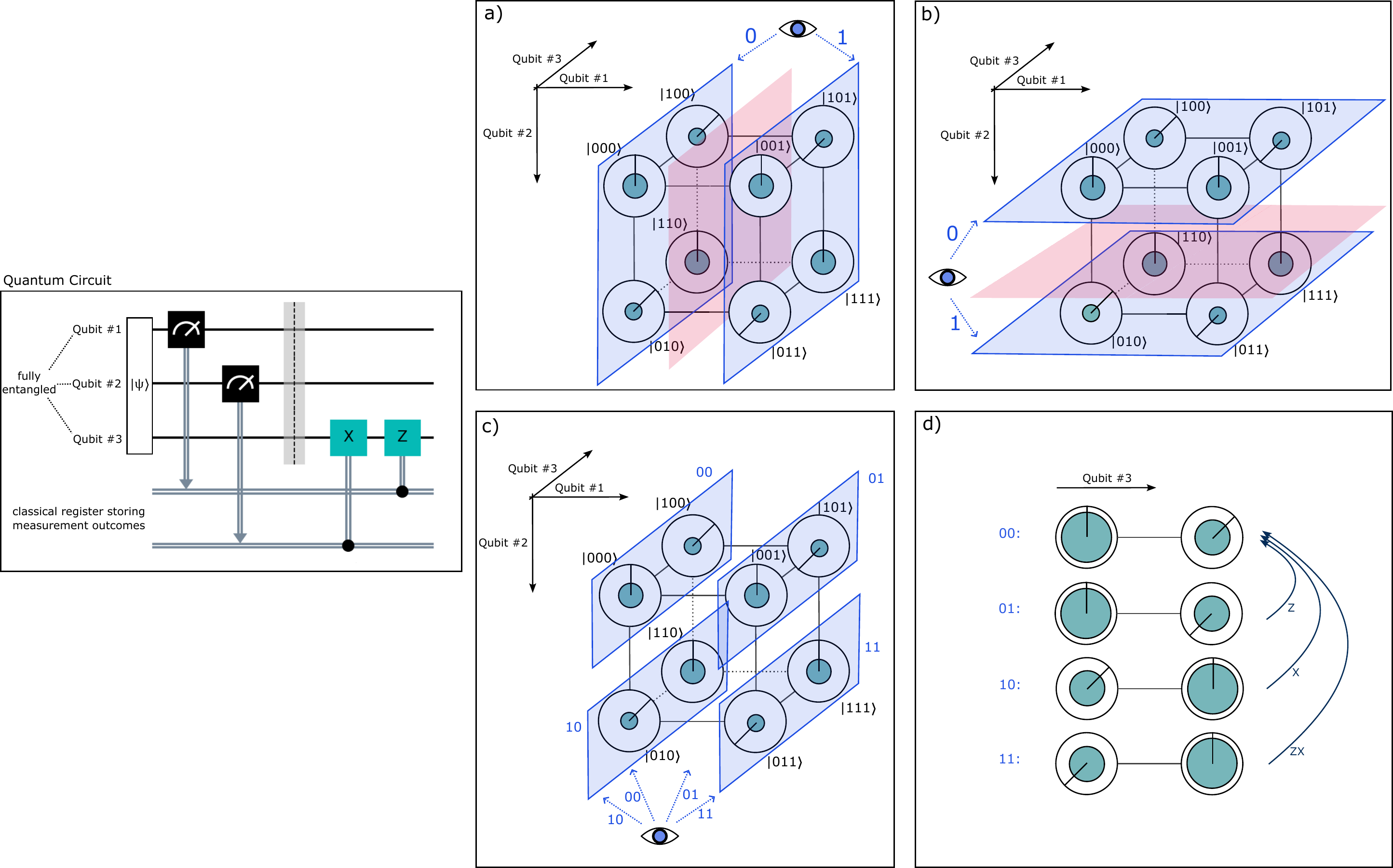}
    \caption{The last steps of the quantum teleportation process: Alice measures and sends the information to Bob who then applies single-qubit gates according to the measurement result. The corresponding quantum circuit is displayed on the left. The system starts in the fully entangled state $2\ket{\psi}=(\sqrt{2/3}\ket{0}+1/\sqrt{3}e^{-i\pi/4}\ket{1})\ket{00}+(\sqrt{2/3}\ket{0}+1/\sqrt{3}e^{3i\pi/4}\ket{1})\ket{01}+(1/\sqrt{3}e^{-i\pi/4}\ket{0}+\sqrt{2/3}\ket{1})\ket{10}+(1/\sqrt{3}e^{3i\pi/4}\ket{0}+\sqrt{2/3}\ket{1})\ket{11}$ also shown in Fig.~\ref{fig:qt_sum}. Alice measures qubit~\#1 and \#2. a) The measurement of qubit~\#1 and a red plane showing that there is not symmetry along the axis of qubit~\#1, i.e. it is not separable from the system; b) The measurement of qubit~\#2 and, again, a red plane indicating the inseparability from the system. c) The combined measurement of qubit~\#1 and \#2. Because the sum of the areas of the inner circles is the same for all of the four possibilities, the chance of measuring any of the four values is 25\%. d) The four possible states of qubit~\#3 depending on the measurement outcome. Bob has to apply an $X$ and/or a $Z$-gate such that qubit~\#3 is in the previous state of qubit~\#1.}
    \label{fig:qt_measurement}
\end{figure*}

\section{Visualizing 4-4-Separability in four-Qubit Systems}\label{sec:4-4-sep}

Let $\ket{\psi}=\sum_{i=0}^{15}c_i\ket{i}$ be a pure four-qubit state. $2-8$-separability is easily examined in this system using Corollary~\ref{cor:par_sep} which states that we can apply two planes to the two hypercubes or a plane between the hypercubes and look for a ratio along these planes in order to examine separability characteristics. To examine 4-4-separability, we write $\ket{\psi}$ as 

\begin{equation}
    \ket{\psi}=\begin{bmatrix}
        c_{0000} & c_{0001} & c_{0010} & c_{0011} \\
        c_{0100} & c_{0101} & c_{0110} & c_{0111} \\
        c_{1000} & c_{1001} & c_{1010} & c_{1011} \\
        c_{1100} & c_{1101} & c_{1110} & c_{1111} \\
    \end{bmatrix}.
\end{equation}

Let $c_{0000}\neq0$. Then, Corollary~\ref{cor:ratios} states that $\ket{\psi}$ is 4-4-separable if and only if it can be written as 

\begin{equation}
    \ket{\psi}=\begin{bmatrix}
        c_{0000} & m_1c_{0000} & m_2c_{0000} & m_3c_{0000} \\
        c_{0100} & m_1c_{0100} & m_2c_{0100} & m_3c_{0100} \\
        c_{1000} & m_1c_{1000} & m_2c_{1000} & m_3c_{1000} \\
        c_{1100} & m_1c_{1100} & m_2c_{1100} & m_3c_{1100} \\
    \end{bmatrix}.
\end{equation}

This is visualized in figure \ref{fig:4-4-sep}. If $c_{0000}=0$, then $c_{xy00}=0$ has to hold for the state to be separable (similarly in the case of $c_{0001}=0$: $c_{xy01}=0$, see the condition in Corollary~\ref{cor:extra0}). Then, we only need to account for two ratios, e.g. going down and going left in both cubes.

\begin{figure*}[htb]
    \centering
    \includegraphics[width=0.7\textwidth]{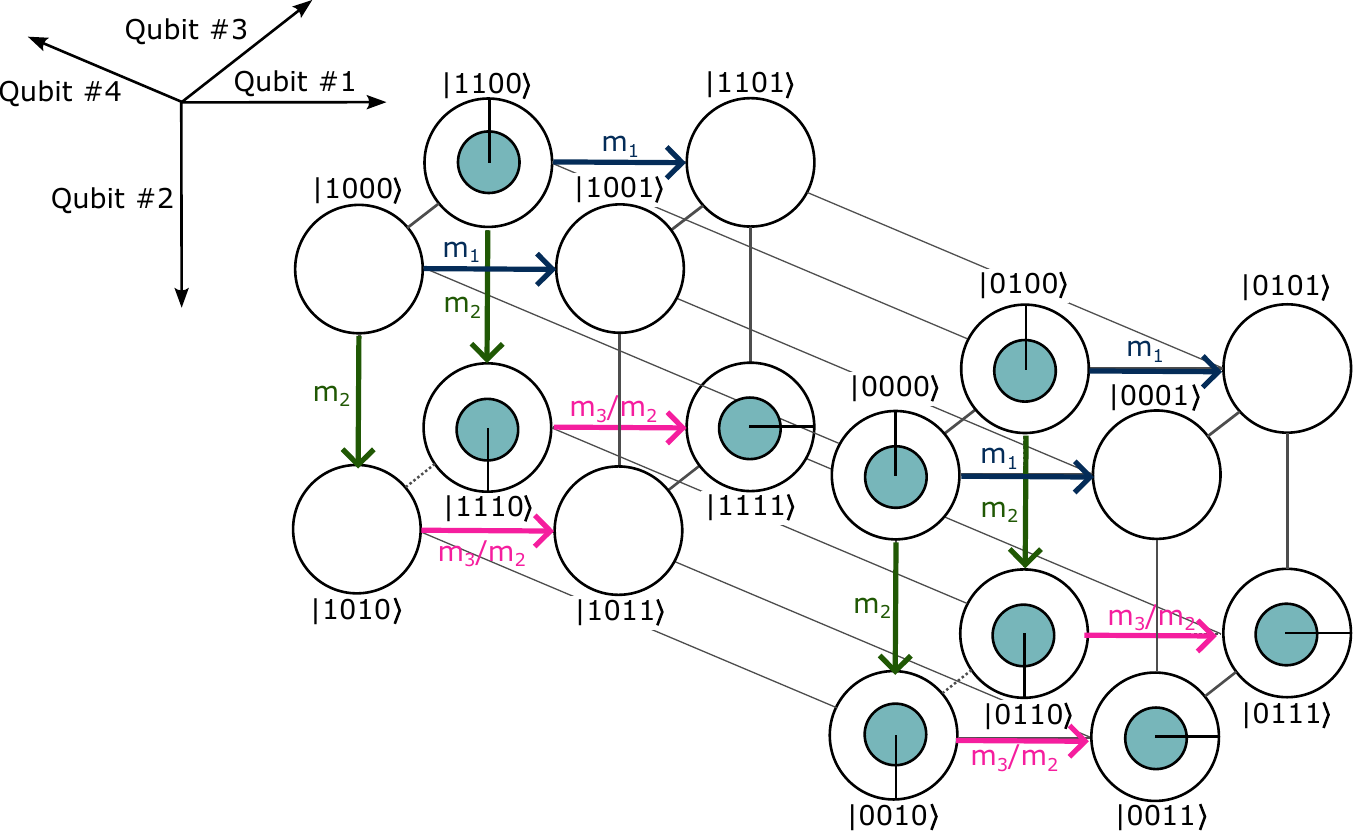}
    \caption{Examination the 4-4-separable state $\ket{\psi}=1/3(\ket{0000}-\ket{0010}+e^{-i\pi/2}\ket{0011}+\ket{0100}-\ket{0110}+e^{-i\pi/2}\ket{0111}+\ket{1100}-\ket{1110}+e^{-i\pi/2}\ket{1111})=1/3(\ket{00}+\ket{01}+\ket{11})\otimes(\ket{00}-\ket{10}+e^{i\pi/2}\ket{11})$ in DCN with Corollary~\ref{cor:ratios}. We find 4-4-separability with $m_1=0$, $m_2=-1$ and $m_3/m_2=e^{i\pi/2}$. Note that we can use $m_2\neq0$ to not need to look at the diagonal of $m_3$. Also note that there is no symmetry plane (as they would have to be applied to both cubes or between the cubes for separability of qubit~\#4), i.e. no single qubit can be separate from the system (Corollary~\ref{cor:par_sep}).}
    \label{fig:4-4-sep}
\end{figure*}

In principle, it is possible to go beyond 4-4 separability, to 4-8 separability in five-qubit systems (for this, do the same thing as in figure \ref{fig:4-4-sep} in two cubes) or even 8-8 separability in six-qubit systems (in this case, one would need to account for seven different ratios and each in seven different directions). These very complex entanglement properties are less and less easily spotted, but the process remains the same.

\section{Modular DCN in four- and five-qubit systems}\label{sec:modular}

In this section, we give examples on how to represent qubit ensembles of four and five qubits in various ways. There are multiple ways to represent four-qubit systems (systems with 16 basis states) in three dimensional space (and, on paper, then in two dimensions). One natural possibility is a projection of a four dimensional hypercube into three dimensions. This retains the geometric depiction of entanglement that is presented in this paper. For the ratio characterization of separability, eight pairs of coefficients have to be compared for each qubit in order to check for separability of that qubit from the system. 

In quantum settings, decoherence is a common factor to consider. Quantum Error correction can counteract the effects of decoherence. Classical error correction is often thought of in terms of hypercubes ~\cite{bellcore_aiello, doi:10.1080/00207160211287,  doi:10.1137/S0097539798332464}. In fact, similar ideas exist for quantum error correction as seen in hypercubes or hypercube-like lattices~\cite{Kubica_2015,Vasmer_2022}. Therefore, it makes sense to apply DCN to quantum error detection and correction. Here, we show the four-qubit error detection code demonstrated experimentally in~\cite{Corcoles2015} in Fig.~\ref{fig:four_qubit_error} in a hypercube. Note that for a code to also \textit{correct} the detected error, it needs five qubits to function~\cite{10.1093/nsr/nwab011}. This five qubit algorithm functions by entangling three qubits into a GHZ state, and then uses two anzilla qubits to correct the error. Interestingly, it can be seen in modular DCN that this algorithm does \textit{not} utilize quantum entanglement between the two subsystems qubit~\#1, \#2 and \#3 and the anzilla qubits qubit\#4 and \#5. Because the state of the system of the first three qubits depends on which error occurred which qubit~\#4 and \#5 depend on, it can be seen as a classical correlation between the two subsystems.

Another possibility is to represent the system using a mixture of circle notation and DCN that we call modular DCN.
We can have two or more qubits on every axis and assign only specific qubits to their own axis. We can then check, again via ratio characterization, separability from the system of the qubits that have their own axis. The five qubit error correction code that is shown in e.g.~\cite{kasirajan_2022} is visualized in Fig.~\ref{fig:3-qubit_ent} (simple three-qubit encoding process and three possible single-qubit flip errors), Fig.~\ref{fig:five_qubits_1} (transfer Syndrome and error correction in modular 2x2x8 DCN) and Fig.~\ref{fig:five_qubit_2} (the last step of error correction in a four-cube system). DCN is flexible as we can arrange qubit ensembles in modular DCN in a variety of different ways to lay focus on specific multi-partite entanglement properties and/or in a way such that the visualized unitary operations remain geometrically intuitive with the aim of enhancing understanding of complex multi-qubit algorithms.

\begin{figure*}
    \centering
    \includegraphics[width=\textwidth]{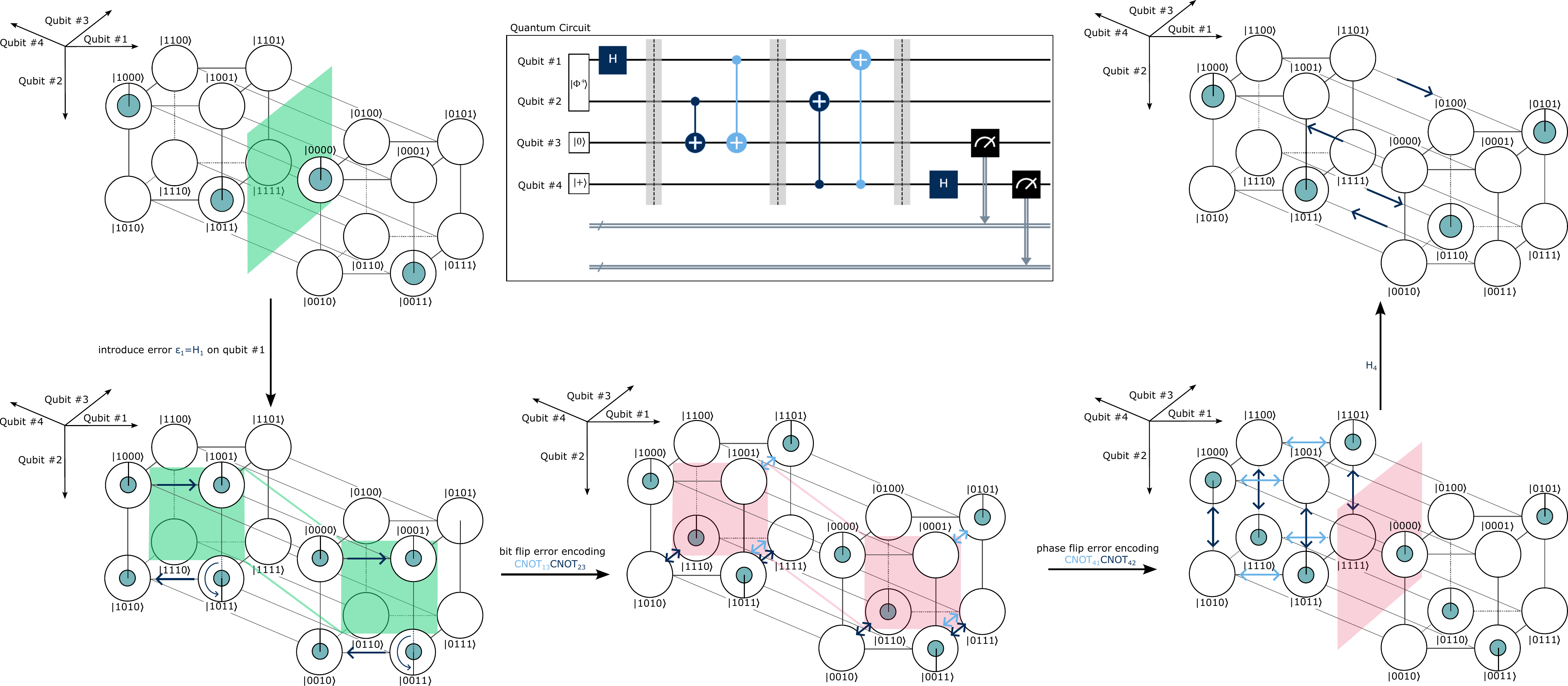}
    \caption{Four-qubit quantum error detection code as demonstrated experimentally in~\cite{Corcoles2015}, here in the case of a Hadamard error. The system is initialized to the state $\ket{\psi}=1/\sqrt{2}(\ket{0}+\ket{1})\otimes\ket{0}\otimes 1/\sqrt{2}(\ket{00}+\ket{11})$ where qubit~\#1 and \#2 are entangled and qubit \#4 is brought into the Hadamard basis $\ket{+}=1/\sqrt{2}(\ket{0}+\ket{1})$ in order to detect a phase flip. First, an error $\epsilon_1$ is applied, in this case a Hadamard error $H_1$ corresponding to half of a bit flip and half a phase flip on qubit~\#1. Then, the bit flip error is encoded onto qubit~\#3 via the CNOT$_{13}$CNOT$_{23}$ operation, entangling qubit~\#3 with qubit~\#1 and \#2. Afterwards, the operation CNOT$_{41}$CNOT$_{42}$ that can be seen as a 180$^\circ$ rotation of the cube corresponding to qubit \#4 being in the state 1 in the plane spanned by qubit~\#1 and \#2, fully (phase-)entangling the system. The Hadamard gate then turns this phase-entanglement into a magnitude entanglement in terms of qubit \#3 and \#4. In the end, qubit \#4 will be found in the state 1 if a phase flip has occurred while qubit~\#3 will be found in the state 1 when a bit flip has occurred. In this case of a Hadamard error, the error detection algorithm will always find that there was some error, as qubit~\#3 and \#4 are anti-correlated as can be seen in DCN.}
    \label{fig:four_qubit_error}
\end{figure*}

\begin{figure*}
    \centering
    \includegraphics[width=0.9\textwidth]{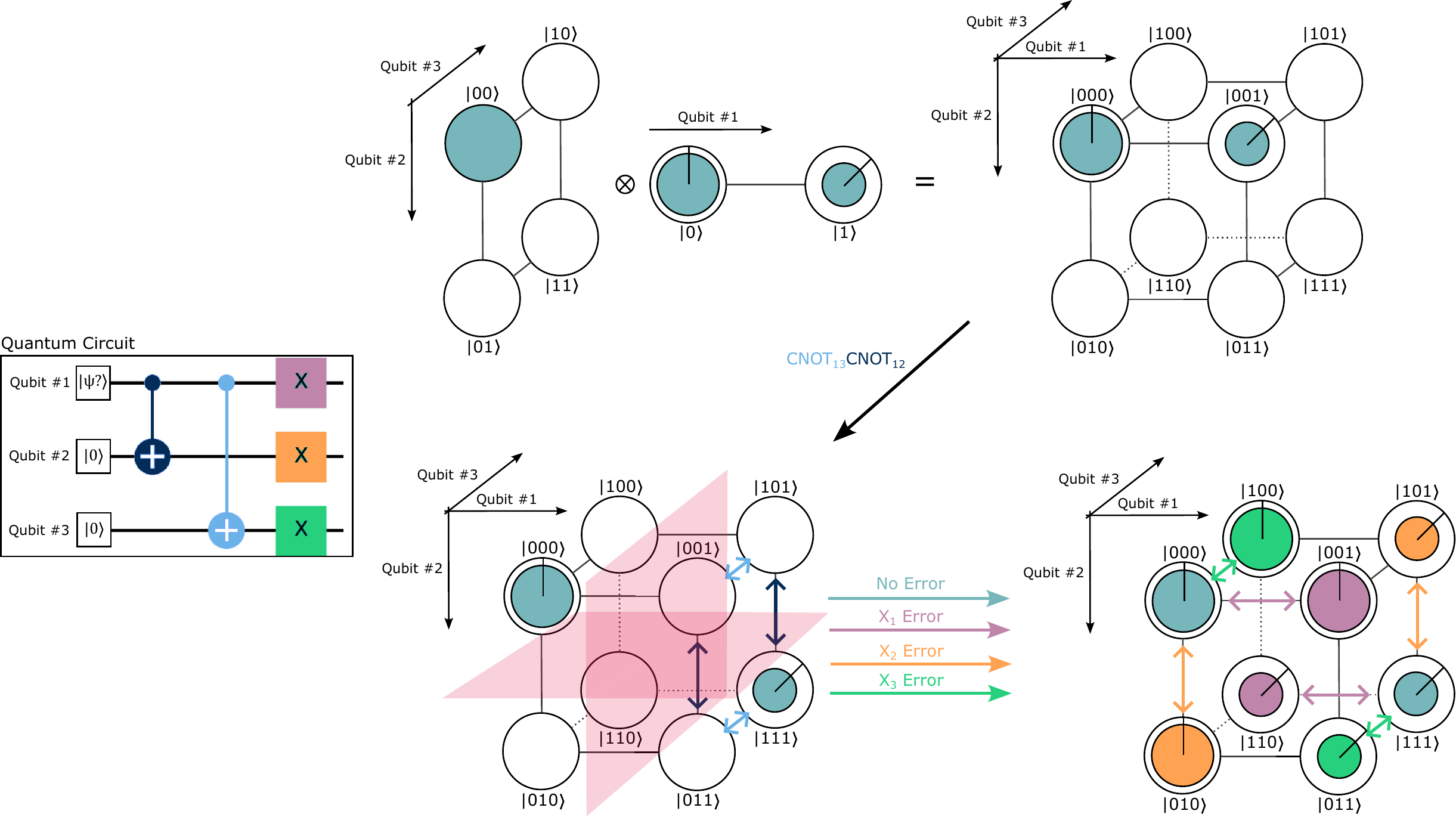}
    \caption{The initial step of error correcting the (arbitrary) state $\ket{\psi}_1=\sqrt{2}/\sqrt{3}\ket{0}+1/\sqrt{3}e^{-i\pi/4}\ket{1}$ using four additional qubits. First, qubit~\#1 is entangled with qubit~\#2 and \#3 in a GHZ-similar state $\ket{\psi}=\sqrt{2}/\sqrt{3}\ket{000}+1/\sqrt{3}e^{-i\pi/4}\ket{111}$ with two CNOT-gates. The system is fully entangled as can be seen by the lack of symmetry indicated by the red planes. Then, a bit flip error is applied. Here, three possible bit flip errors are shown (lilac = bit flip error on qubit~\#1, orange = bit flip error on qubit~\#2 and green = bit flip error on qubit~\#3) as well as the case of no bit flip errors in gray blue. We assume that only one bit flip error occurs at the same time. The bit flip errors do not change entanglement properties of the system.}
    \label{fig:3-qubit_ent}
\end{figure*}

\begin{figure*}
    \centering
\includegraphics[width=\textwidth]{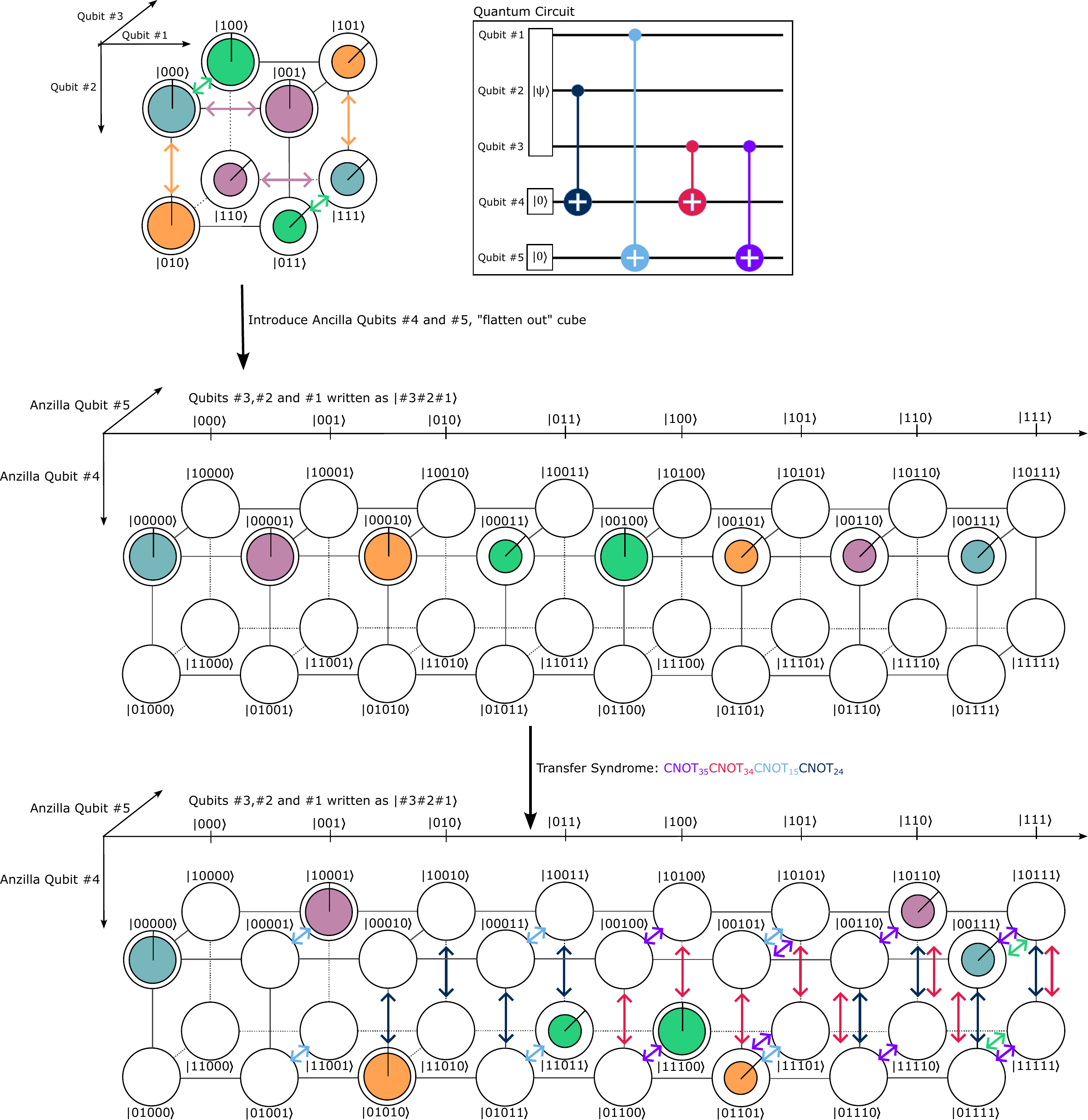}
    \caption{The "transfer syndrome" step of error correcting the state $\ket{\psi}_1=\sqrt{2}/\sqrt{3}\ket{0}+1/\sqrt{3}e^{-i\pi/4}\ket{1}$ using four additional qubits visualized in modular DCN. We start in the final state $\ket{\psi}$ of Fig.~\ref{fig:3-qubit_ent}, flatten out the cube to standard circle notation and introduce the anzilla qubits \#4 and \#5, arranging the system in modular DCN. The CNOT$_{24}$- and CNOT$_{34}$-gates encode an $X_2$-error onto anzilla qubit \#4 and the CNOT$_{35}$- and CNOT$_{15}$-gates encode an $X_1$-error onto anzilla qubit \#5 while an interesting and desirable byproduct of these operations is that an $X_3$-error is encoded on both anzilla qubits.}
    \label{fig:five_qubits_1}
\end{figure*}

\begin{figure*}
    \centering
\includegraphics[width=\textwidth]{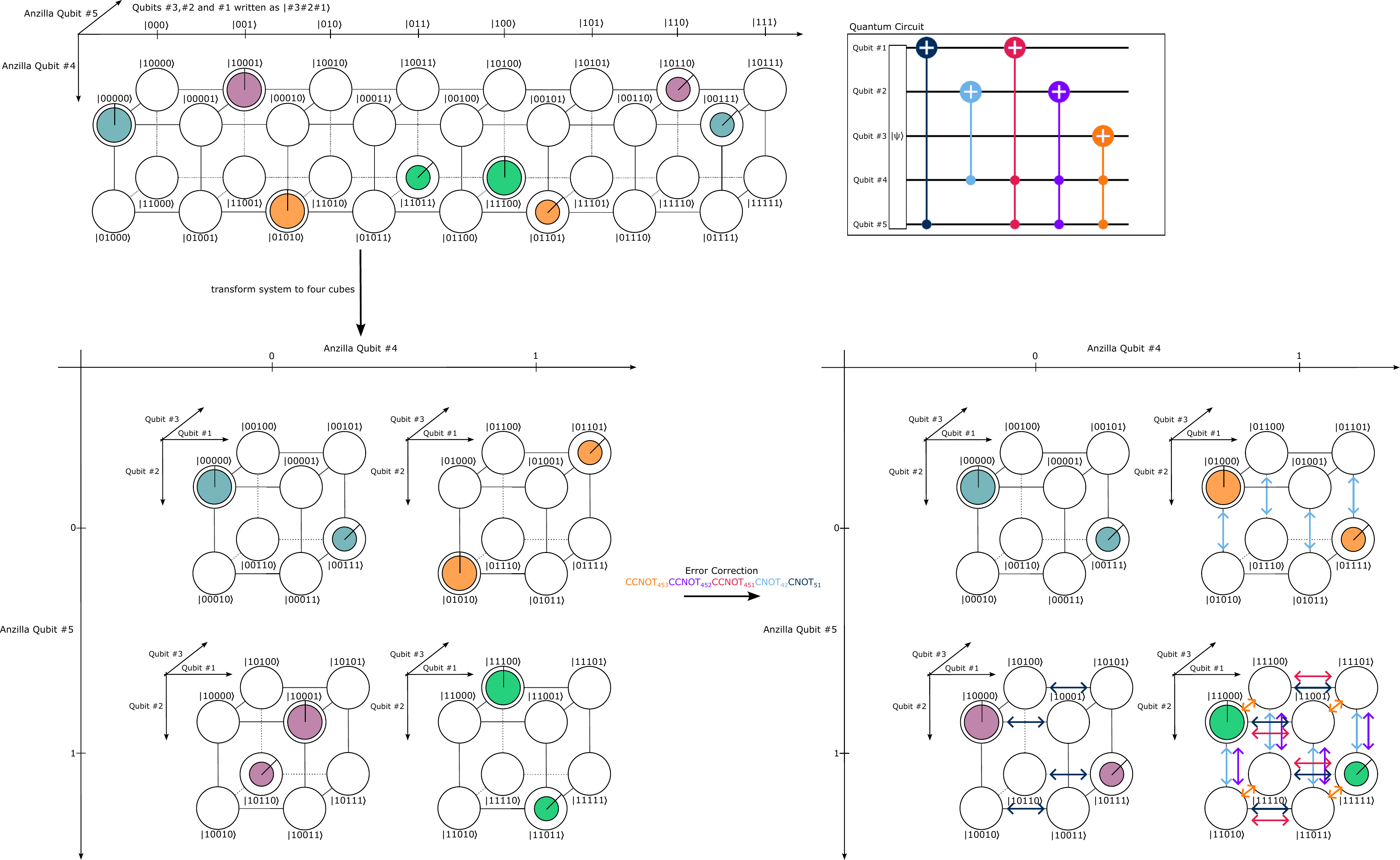}
    \caption{The last step of error correcting the state $\ket{\psi}_1=\sqrt{2}/\sqrt{3}\ket{0}+1/\sqrt{3}e^{-i\pi/4}\ket{1}$ visualized in modular DCN. We start by transforming the depiction of the last state $\ket{\psi}$ in Fig.~\ref{fig:five_qubits_1} to a four-cube system where the cubes are represented in space depending on anzilla qubit \#4 and \#5. Here, we can see that the three different kinds of bit flip errors correspond to three different configurations of anzilla qubits \#4 and \#5. Now, CNOT-gates are applied to correct these errors. The subsystem of qubit~\#1, \#2 and \#3 is classically correlated with the subsystem of qubit~\#4 and \#5, but interestingly, they are \textit{not} entangled even after application of the CNOT-gates (Fig.~\ref{fig:five_qubits_1})! Now, the CNOT$_{51}$ gate corrects the $X_1$-error, the CNOT$_{42}$-gate corrects the $X_2$-error and the CCNOT$_{453}$-gate corrects the $X_3$-error. Lastly, the CCNOT$_{452}$- and CCNOT$_{451}$-gates are needed to counteract the unwanted effects of the first two CNOT-gates in the case of an $X_3$-error. Qubit~\#4 and \#5 are still not entangled with the system. However, the entanglement between qubits~\#1, \#2 and \#3 is preserved. Now we can see that in all four cases, qubit~\#1 is in the desired state $\ket{\psi}_1$. Qubit~\#4 and \#5 can now be measured to see whether a bit flip error has occurred and which one.}
    \label{fig:five_qubit_2}
\end{figure*}
\FloatBarrier
\bibliography{main}
\end{document}